\documentclass[prx,aps,superscriptaddress,showpacs,floatfix,tightenlines,reprint]{revtex4-2}
\usepackage[utf8]{inputenc}
\usepackage[T1]{fontenc}
\usepackage{graphicx}
\usepackage{dcolumn}
\usepackage{bm}
\usepackage{amssymb}
\usepackage{bbm}
\usepackage{dsfont}
\usepackage{bbold}
\usepackage{mathbbol}
\usepackage{amsmath}
\usepackage{url}
\usepackage{amsthm}
\usepackage{layout}
\usepackage{epsfig}
\usepackage{epstopdf}
\usepackage{booktabs}
\usepackage{float}
\usepackage{subfigure}
\usepackage{lipsum}
\usepackage{mathrsfs}
\usepackage{blindtext}
\usepackage{appendix}
\usepackage[hyperindex,breaklinks]{hyperref}
\usepackage{enumerate}
\usepackage{enumitem}
\usepackage{makecell}
\usepackage{multirow}
\usepackage{tabularx}
\usepackage{array}
\usepackage{subfigure}
\usepackage{booktabs}   
\usepackage{threeparttable}
\setcounter{MaxMatrixCols}{10}
\allowdisplaybreaks[4]

\theoremstyle{definition}
\newtheorem{theorem}{\normalfont Theorem}
\newtheorem{lemma}{\normalfont Lemma}

\makeatletter
\renewenvironment{proof}[1][\proofname]{\par
  \pushQED{\qed}
  \normalfont\topsep6\p@\@plus6\p@\relax
  \trivlist
  \item[\hskip\labelsep
        \normalfont
        #1\@addpunct{.}]\ignorespaces
}{%
  \popQED\endtrivlist\@endpefalse
}
\makeatother

\setlength{\abovedisplayskip}{-1pt}
\setlength{\belowdisplayskip}{-1pt}
\setlength{\abovedisplayshortskip}{-1pt}
\setlength{\belowdisplayshortskip}{-1pt}
\allowdisplaybreaks[3]

\usepackage{xcolor}

\newcommand{\yaqi}[1]{\textcolor{black}{#1}}

\hyphenpenalty=1000

\begin{document}

\title{Negativity Percolation in Continuous-Variable Quantum Networks}

\setcounter{footnote}{3}
\author{Yaqi Zhao}
\affiliation{College of Mathematics, Taiyuan University of Technology, Taiyuan, 030024, China}

\author{Kan He}
\email{hekan@tyut.edu.cn}
\affiliation{College of Mathematics, Taiyuan University of Technology, Taiyuan, 030024, China}
\affiliation{School of Mathematics and Statistics, Shaanxi Normal University, Xi'an, 710119, China}

\author{Yongtao~Zhang}
\affiliation{Network Science and Technology Center, Rensselaer Polytechnic Institute, Troy, New York 12180, USA}
\affiliation{Department of Physics, Applied Physics, and Astronomy, Rensselaer Polytechnic Institute, Troy, New York 12180, USA}

\author{Jinchuan Hou}
\email{jinchuanhou@aliyun.com}
\affiliation{College of Mathematics, Taiyuan University of Technology, Taiyuan, 030024, China}

\author{Jianxi Gao}
\affiliation{Network Science and Technology Center, Rensselaer Polytechnic Institute, Troy, New York 12180, USA}
\affiliation{Department of Computer Science, Rensselaer Polytechnic Institute, Troy, New York 12180, USA}

\author{Shlomo Havlin}
\affiliation{Department of Physics, Bar-Ilan University, Ramat Gan 52900, Israel}

\author{Xiangyi Meng}
\email{xmenggroup@gmail.com}
\affiliation{Network Science and Technology Center, Rensselaer Polytechnic Institute, Troy, New York 12180, USA}
\affiliation{Department of Physics, Applied Physics, and Astronomy, Rensselaer Polytechnic Institute, Troy, New York 12180, USA}

\date{\today }
\begin{abstract}
\section*{Abstract}

Quantum networks (QNs) have been predominantly driven by discrete‑variable (DV) architectures. Yet, optical platforms naturally generate Gaussian states---the common states of continuous-variable (CV) systems, making CV-based QNs an attractive route toward scalable, chip‑integrated quantum computation and communication. To bridge the gap between well‑studied DV entanglement percolation theories and their CV counterpart, we introduce a Gaussian-to-Gaussian entanglement distribution scheme that deterministically transports two‑mode squeezed vacuum states across large CV networks. 
Analysis of the scheme's collective behavior using statistical-physics methods reveals a new form of entanglement percolation---negativity percolation theory (NegPT)---characterized by a bounded entanglement measure called the ratio negativity.
We discover that NegPT exhibits a mixed-order phase transition, marked simultaneously by both an abrupt change in global entanglement and a long-range correlation between nodes. This distinctive behavior places CV-based QNs in a new universality class, fundamentally distinct from DV systems.
Additionally, the abruptness of this transition introduces a critical vulnerability of CV-based QNs: conventional feedback mechanism becomes inherently unstable near the threshold, highlighting practical implications for stabilizing large-scale CV-based QNs. Our results unify statistical models for CV-based entanglement distribution \yaqi{and} uncover previously unexplored critical phenomena unique to CV systems, providing valuable insights and guidelines essential for developing robust, feedback-stabilized QNs.
\end{abstract}

\maketitle

\section*{Introduction}

\begin{center}
    \begin{table*}[t]
    \begin{threeparttable}
    \caption{\yaqi{Comparison of percolation theories (PTs) for entanglement distribution on discrete-variable (DV) and continuous-variable (CV) quantum networks (QNs). The PTs are based on distinct entanglement measures: concurrence~\cite{Hill1997,G_concurrence2005} for DV qudit systems, and ratio negativity~\cite{Ratio_negativity2024} for CV infinite-dimensional squeezed states.}}
    \label{table_comparision}
    \begin{tabular}{l|c|c|c}
     \Xhline{1pt}
		\hline\hline
		&  Entanglement distribution scheme & PT / entanglement measure  & Phase transition type\\
		\hline
		  {DV} & Deterministic entanglement transmission (DET) & ConPT / Concurrence~$c$ & {Continuous} (second order)\\
		{CV} & Gaussian-to-Gaussian (G-G) DET &  NegPT / Ratio negativity~$\chi$ &  {Discontinuous} (mixed order) \\
         \noalign{\smallskip}\hline
	\end{tabular}
\end{threeparttable}
\end{table*}
\end{center}

{Quantum information technologies~\cite{MI2000} fundamentally rely on our ability to {distribute entanglement}~\cite{CGCRAW1993,JPHH1997},
the process of establishing quantum correlations between distant nodes across network architectures, from chips to the internet. This capability  empowers applications ranging from quantum computation~\cite{Cohen2018,Hu2024} to secure quantum communication\yaqi{~\cite{CS1992,HWJP1998,NCHN2011,Munro2012,QKD_photon2022,secure_network2022,QKD2023}}.
The collective behavior of entanglement distribution across large-scale systems is often described by the quantum network (QN) model~\cite{AJM2007}. 
Within this framework, a QN is represented as a graph where nodes represent spatially separated parties, and links denote bipartite entangled states shared between adjacent nodes. 
The distribution process itself is executed via protocols based on local operations and classical communication (LOCC), such as entanglement swapping~\cite{MAMA1993,SVP1999} and entanglement concentration~\cite{CHSB1996,FM2001}, along with more sophisticated schemes~\cite{KH2016,DES2017,RS2017,MHD2019}, applied across the network's links to establish long-range correlations.}

{Investigating how LOCC schemes distribute entanglement across a QN has uncovered fundamental correspondences to percolation theories. The ability to establish long-range entanglement in QN often mirrors the emergence of large-scale connectivity described by percolation theories---the statistical physics framework governing phase transitions on graphs~\cite{cross-probab-square_k80,cross-probab-cft_c92,netw-percolation_ceah00,BOCCALETTI2006,ARENAS2008,hypergr-percolation_bd23,hypergr-percolation_lglk23}.
This conceptual bridge between entanglement distribution and percolation was first built in 2007, linking a probabilistic entanglement distribution scheme~\cite{AJM2007} to classical bond percolation~\cite{JWE1980}. 
Later, the development of a more effective deterministic entanglement transmission (DET) scheme~\cite{XYJSA2023} (where LOCC steps succeed deterministically, eliminating classical randomness unlike earlier probabilistic schemes) led to its mapping onto an essentially  distinct variant, concurrence percolation theory (ConPT)~\cite{XJS2021,OXSGBJ2022}.
This interplay continues with more recent mappings identified between quantum memories and continuum percolation~\cite{Meng2024}, and between entanglement routing and path percolation~\cite{path-percolation_kr24,path-percolation_mhrk24}, underscoring how incorporating quantum resources like memories or repeaters may also profoundly alter the underlying percolation landscape.
}

{These established mappings are, however, exclusively confined to discrete-variable (DV) systems (e.g.,~qubits). They all overlook an alternative architecture particularly prominent in optical settings: the {continuous-variable (CV) system}, leveraging the continuous degrees of freedom of light fields (amplitude and phase)~\yaqi{\cite{SP2005,Wang2007,CSRNTJS2012,quantumCV_2023}}.
In contrast to DV encoding, which is inherently reliant on random single-photon sources, the CV encoding is more advantageous with its unconditional, consistent generation of entanglement via nonlinear optical interactions, bypassing the hurdles of unpredictability in DV sources. This nonrandom nature unlocks considerable potential for the scalability of quantum optical applications~\cite{EMJ2013,JT2017,FW2018,JMNT2020}, a potential now being realized with the advent of CV photonic chips~\cite{CV_code_chip2025}.}

{Even with their clear advantages and growing experimental relevance, the  collective characteristics of CV systems---particularly as modeled by QN---remain unexplored.
Crucial questions persist: How to distribute entanglement across a CV-encoded QN? Does a unifying correspondence to  percolation theory also apply? If so, does the continuous nature of CV lead to fundamentally distinct network physics compared to DV systems?}

To address these conceptual gaps, here we
introduce a Gaussian-to-Gaussian (G-G) DET scheme, wherein both the inputs and outputs are {two-mode squeezed vacuum states} (TMSVSs), {defined as} $|\psi^{r}\rangle=\sqrt{1-\tanh^2 r}\sum\nolimits_{n=0}^{\infty}\tanh^n r|nn\rangle$, which are a class of entangled CV Gaussian states characterized by the squeezing parameter $r\in [0,\infty)$~\cite{SP2005,Wang2007,CSRNTJS2012}.
Akin to DV-based DET~\cite{XYJSA2023}, the G-G DET employs deterministic entanglement swapping~\cite{P2002} and concentration~\cite{MRN} adapted for TMSVSs, offering an avenue for entanglement distribution in optical systems.

We find that the G-G DET scheme corresponds to a new variant of percolation: the {negativity percolation theory} (NegPT). The theory is distinct from its DV counterparts: we define it neither on probability~\cite{JWE1980} nor concurrence~\cite{Wootters1998}, but on a bounded entanglement measure $\chi_{\mathcal{N}}\in[0,1]$ termed {ratio negativity}~\cite{Ratio_negativity2024} that simplifies to $\chi\equiv \tanh r$ for TMSVS $|\psi^{r}\rangle$.
This presents a broader spectrum of correspondences between entanglement distribution schemes and percolation theories 
(Table~\ref{table_comparision}).

Further, we show that the NegPT exhibits a mixed-order phase transition~\cite{mix-order-phase-transit_m23},
characterized by both a discontinuous change at a critical threshold $\chi_\text{th}$ and a divergent correlation length $l^*\sim \left|\chi-\chi_\text{th}\right|^{-z\nu}$. 
{While mixed-order phase transitions have been widely observed {in spin systems~\cite{mix-order-phase-transit_m23}, interdependent networks~\cite{Interdependent_net2014}, 
superconducting devices~\cite{interdependent_bglvfh23},
overload systems~\cite{mix-order-phase-transit_pbplrbh24}, $k$-core percolation~\cite{k_core2024}}, and colloidal crystals~\cite{mix-order-phase-transit_atc17}, their observation in QNs is unprecedented. Our discovery extends the reach of mixed-order phase transitions beyond traditional statistical physics systems, into the rapidly evolving field of quantum information for the first time. Indeed,}
our finding of the NegPT, with a unique thermal critical exponent $z\nu\approx  1/2$ in Bethe lattices, {clearly highlights a universality class that is distinct from both classical and concurrence percolation (where $z\nu\approx 1$)~\cite{XJS2021,OXSGBJ2022}.} 
We also notice that the onset of mixed-order transitions may persist for generalized CV operations and even non-Gaussian states, suggesting that the NegPT may underlie the critical phenomena for generic CV-based QNs.

{Moreover, our theoretical analysis reveals a potential practical {vulnerability} within large-scale CV-based QNs: We show that the sharp transition at $\chi_\text{th}$ challenges stabilization via standard feedback, risking unstable ``on/off'' oscillations near the threshold---in stark contrast with continuous transitions. Given the prevalence of quantum feedback control~\cite{feedback_coherent1994,quantum_control1999,feedback_coherenct2000,feedback2002,feedback2002_2,feedback2005,quantum_control2008Domenico,quantum_control2010,feedback2012,feedback2012_2,feedback2017,feedback_remote_entanglement2015} in quantum optical implementations, this instability acts as a crucial {alert}: scaling CV systems requires careful feedback strategies to maintain robust operation near the ``brink of failure.'' We initiate the characterization of this behavior and its network implications.}

\section*{Results}

\subsection{Gaussian QN}

To begin with, we consider a Gaussian QN built by bipartite, pure Gaussian states $|\Psi\rangle_{AB}$ involving multiple modes (say, $K_A$ modes + $K_B$ modes) in general. A key insight for Gaussian states is that $|\Psi\rangle_{AB}$ can be mapped into a tensor product of $K\yaqi{\leq}\min{(K_A,K_B)}$ TMSVSs (along with trivial vacuum states) via local Gaussian unitaries~\cite{Modewise2003}: $|\Psi\rangle_{AB}=\bigotimes_{k=1}^{K}|\psi^{r_{k}}\rangle\otimes |00\dots\rangle$,
where each $r_k$ corresponds to the squeezing parameter of the $k$-th TMSVS. 
This mode decomposition~\cite{Modewise2003} simplifies our QN model to primarily focus on TMSVSs, such that each TMSVS represents a QN link between two nodes. A bipartite multi-mode pure Gaussian state can be treated as multi-links that connect the same pair of nodes multiple times, which are allowed in the formation of the QN.

\begin{figure*}
    \centering
    \includegraphics[width=360pt]{Figure1.png}
    
    \caption{\yaqi{Gaussian-to-Gaussian deterministic entanglement transmission (G-G DET) scheme. Applicable to Gaussian quantum networks (QN), the scheme consists of two LOCC protocols:  
    (a)~Entanglement swapping, facilitated by homodyne detection and displacement~\cite{P2002};
    (b)~Entanglement concentration, facilitated by non-standard optical components~\cite{MRN}. Both protocols are deterministic, taking two (or more) TMSVS $|\psi^{r_{i}}\rangle$ as input and a new TMSVS $|\psi^{r}\rangle$ as output.
    (c)~The two LOCC protocols map to series and parallel rules, respectively, to construct G-G DET.
    (d)~Consider a QN example built upon three node. The G-G DET scheme consists of two steps:
   First, the parallel rule converts the states $|\psi^{r_1}\rangle$ and $|\psi^{r_2}\rangle$ ($r_1\geq r_2$) into $|\psi^{r_{1,2}}\rangle$ with $\sinh r_{1,2} = \sinh r_1 \cosh r_2$ between $S$ and $R$; second, the series rule transforms $|\psi^{r_{1,2}}\rangle$ and $|\psi^{r_{3}}\rangle$ to the final state $|\psi^{r}\rangle$ with the ratio negativity $\text{X}_{\text{SC}}=\tanh r_{1,2} \tanh r_3$ between $S$ and $T$.}}\label{figure1}
\end{figure*}

\subsection{A DET scheme for series-parallel Gaussian QNs}

Our G-G DET scheme, following and extending the original DET scheme proposed for DV systems~\cite{XYJSA2023}, consists of entanglement swapping and concentration operations applied to Gaussian states.
The two operations, taking TMSVSs as the input and a new TMSVS as the output, are inherently deterministic in the CV domain. These operations can be intuitively understood as ``series and parallel rules,'' as commonly used in studying the effective resistance in circuit networks~\cite{XYJSA2023}. Hence, our G-G DET scheme can be applied to any network with a series-parallel topology~\cite{Duffin1965}. The corresponding series (swapping) and parallel (concentration) rules are introduced as follows:

First, we introduce the G-G entanglement swapping.
Consider a setup involving three spatially separated parties: the Source ($S$), Relay ($R$), and Target ($T$). A TMSVS $|\psi^{r_{1}}\rangle$ is shared between $S$ and $R$, while another $|\psi^{r_{2}}\rangle$ is shared between $R$ and $T$ \yaqi{[Fig.~\ref{figure1}(a)]}. 
The two parties $S$ and $T$ do not directly share any entangled state.
Nevertheless, executing homodyne measurement at $R$ and applying matching displacements at $S$ and $T$ can deterministically convert $|\psi^{r_{1}}\rangle$ and $|\psi^{r_{2}}\rangle$ into a new TMSVS, $|\psi^{r}\rangle$, directly shared between $S$ and $T$~\cite{P2002}. The new squeezing parameter $r$ is (ideally) given by $\tanh{r}=\tanh{r_{1}}\tanh{r_{2}}$~\cite{P2002}.
This can be extended to a one-dimensional chain consisting of a series of $N$ links corresponding to TMSVSs $|\psi^{r_1}\rangle,|\psi^{r_2}\rangle,\cdots,|\psi^{r_{N}}\rangle$ (see \yaqi{Methods}), respectively, leading to the series rule [\yaqi{Fig}.~\ref{figure1}(c)] for $|\psi^r\rangle$ being the final created TMSVS.

Note that the series rule is
independent of the order in which the links are transformed. This commutativity, as we will see, does not apply to the parallel rule below.

\begin{table*}[t!]
\centering
\begin{threeparttable}
    \caption{Connectivity rules for the NegPT on different topologies. Each link has identical ratio negativity $\chi\equiv \tanh r$.}
    \label{table_negpt}
    \begin{tabular}{c|cc|ccc}
     \Xhline{1pt}
    \hline\hline\noalign{\smallskip}
     & \multicolumn{2}{c|}{Exact rules} & \multicolumn{3}{c}{Approximate rules (by the star-mesh transform~\cite{XJS2021,OXSGBJ2022})} \\
     \noalign{\smallskip}\hline\noalign{\smallskip}
    Topology:  &Series & Parallel & Wheatstone bridge\tnote{\yaqi{a}} & Kelvin bridge\tnote{\yaqi{a}} & (higher-order topologies)\\
     & \includegraphics[width=70pt]{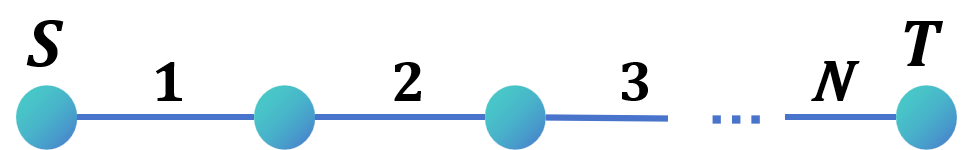} & \includegraphics[width=70pt]{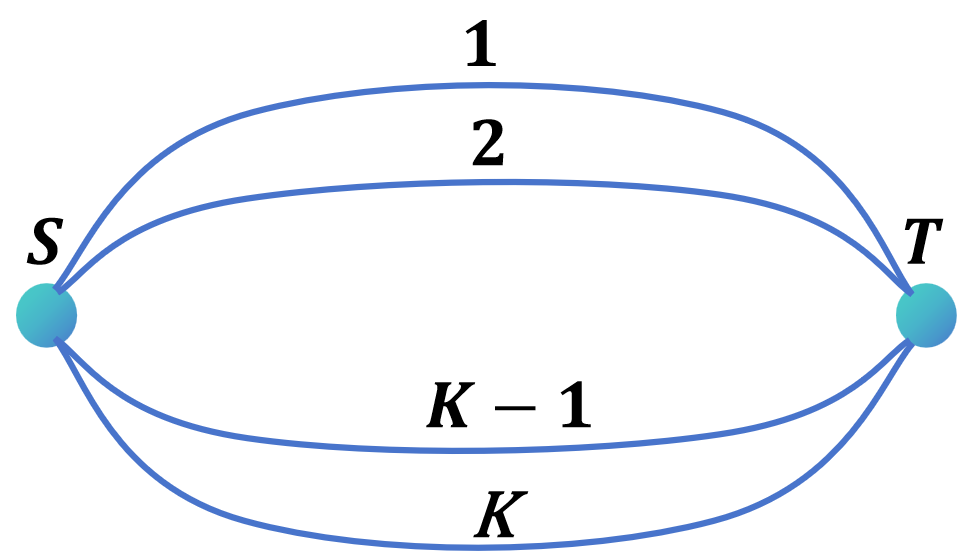} & \includegraphics[width=70pt]{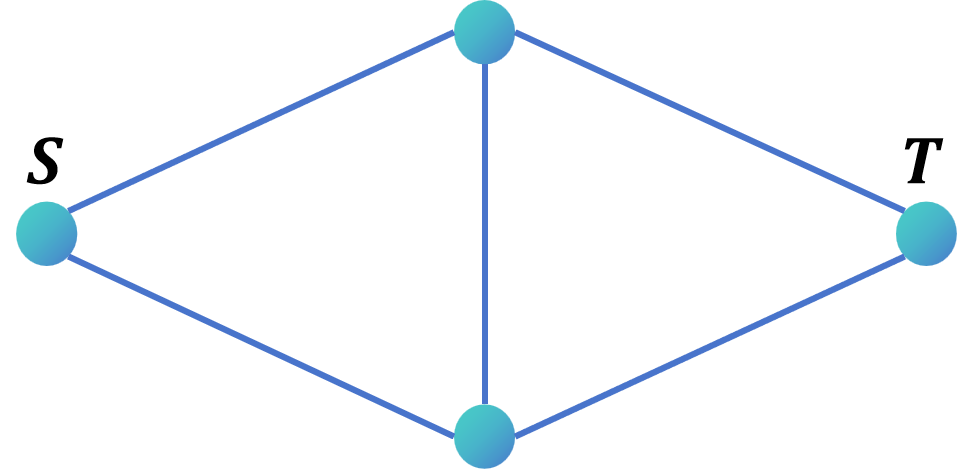} & \includegraphics[width=70pt]{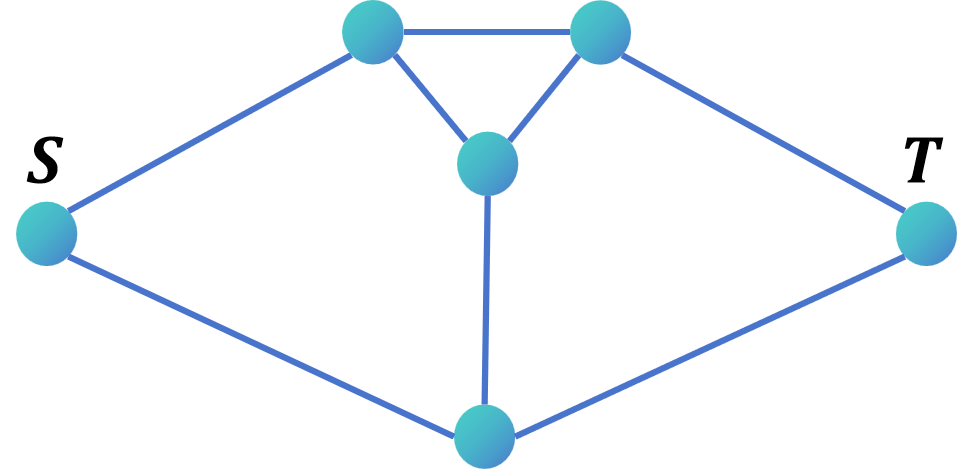} & $\dots$\\
    $\mathrm{X}_\text{SC}$ & $=\chi^N$ & $={\chi} \left[\chi^2+\left(1-\chi^{2}\right)^K\right]^{-1/2} $ & $\approx b(\chi,y(\chi))$ & $\approx b(\chi,y(\chi \delta(\chi)))$ & $\dots$\\
     \noalign{\smallskip}\hline
	\end{tabular}
    \begin{tablenotes}
    \footnotesize
    \item[\yaqi{a}]{
    $b(\chi,y)=\frac{\chi^2}{\sqrt{\chi^4+\left(1-y^2\right)\left[\left(\chi^2y^2-y^2+1\right)^2-\chi^4\right]}} $; $y(x)$ and $\delta(x)$ are the Y-$\Delta$ and $\Delta$-Y transforms of $x$, respectively~(\yaqi{see Methods}).
     }\\
    \end{tablenotes}
     \end{threeparttable}
\end{table*}

We then give the G-G entanglement concentration.
It is known that Gaussian states cannot be feasibly concentrated using standard optical components~\cite{JSM2002}.
The only exception we know of is provided in Ref.~\cite{MRN}, offering a conceptual construction of LOCC  for concentrating entanglement from pure Gaussian states. Such LOCC must involve non-standard optical components.
Assuming this, if we consider a tensor product of two TMSVSs {shared by two parties $S$ and $T$}, $|\psi^{r_{1}}\rangle\otimes |\psi^{r_{2}}\rangle$ ($r_1\geq r_2$), this state can be deterministically converted to a single TMSVS and a vacuum state, $|\psi^{r}\rangle \otimes |00\rangle$ by LOCC \yaqi{[Fig.~\ref{figure1}(b)]}~\cite{MRN}, where $\sinh{r}=\sinh{r_{1}}\cosh{r_{2}}$~\cite{MRN}. 
The result also scales to $K$ parallel links,
$|\Psi\rangle_{{ST}}=\mathop{\bigotimes}\nolimits_{k=1}^{K}|\psi^{r_{k}}\rangle$  (\yaqi{see Methods}). 
By iteratively applying the Gaussian concentration  $K-1$ times, the multi-mode state $|\Psi\rangle_{{ST}}$ is concentrated into a single TMSVS, leading to the parallel rule [\yaqi{Fig.}~\ref{figure1}(c)].

The parallel rule is not commutative due to the presence of a $\sinh r_1$ term before other $\cosh r_k$ terms. We find that the optimal entanglement of $|\psi^{r}\rangle$ is achieved by placing the largest squeezing parameter $r_1$ into $\sinh r_1$ (\yaqi{see Methods}).

\subsection{Negativity percolation theory  (NegPT)}

The G-G DET scheme provides the foundation for a percolation-like theory of ``quantum connectivity,'' which we term negativity percolation. We consider a homogeneous QN where each link represents a TMSVS with an identical entanglement weight $\chi$. This framework is analogous to classical bond percolation on a network where each link is active with an identical probability $p$~\cite{DS2000}. 

In classical percolation theory, the connectivity between two arbitrary sets of nodes, $S$ and $T$, can be quantified by the ``sponge-crossing'' probability, $P_\text{SC}$~\cite{XJS2021,OXSGBJ2022}. This quantity is the combination of probabilities over all paths connecting the two sets. Drawing a parallel, we introduce the ``sponge-crossing'' ratio negativity, $\mathrm{X}_\text{SC}$, which represents an analogous combination of all paths but is framed in terms of the ratio negativity. \yaqi{Specifically, consider a Gaussian QN with endpoints $S$ and $T$ that initially share no entanglement. By implementing the G-G DET scheme across the QN, $S$ and $T$ ultimately share a TMSVS $|\psi^r\rangle$, with entanglement established between them. The final entanglement is quantified by the ratio negativity of $|\psi^r\rangle$, which we refer to as the sponge-crossing ratio negativity $\mathrm{X}_\text{SC}$.}

For both classical and negativity percolation, this aggregation of all path contributions is calculated as follows:

(1) When the network topology between $S$ and $T$ is series-parallel, $\mathrm{X}_\text{SC}$ can be decomposed into the iteration of the exact series and parallel rules \yaqi{[e.g., Fig.~\ref{figure1}(d)]}. Additionally, in a series network \yaqi{of $N$  TMSVSs with identical ratio negativity $\chi$}, $\mathrm{X}_\text{SC}=\chi^N$ has the same form as the probability measure $p$ (or concurrence $c$) in classical (or concurrence) percolation~\cite{XJS2021,OXSGBJ2022}. 
This similarity prompts us to refer to the theory as the NegPT, emphasizing on the special role that ratio negativity $\chi$ plays that is analogous to $p$ (or $c$).

(2) When the network topology between $S$ and $T$ is not series-parallel, we do not know the exact transmission rules (beyond the scope of G-G DET). However, we can approximate these higher-order rules using transforms based {only} on the series/parallel rules (\yaqi{see Methods}). 

Table~\ref{table_negpt} {shows} examples on how to calculate $\mathrm{X}_\text{SC}$ (exactly or approximately) for various network topologies. When the number of QN nodes goes to infinity, similar to classical percolation, we assume that a nontrivial critical threshold exists,
i.e.,~a minimum entanglement $\chi_{\text{th}} \in [0, 1]$ per link that makes $\mathrm{X}_\text{SC}$ exceed zero.

\begin{figure*}[t!]
    \centering
     \includegraphics[width=500pt]{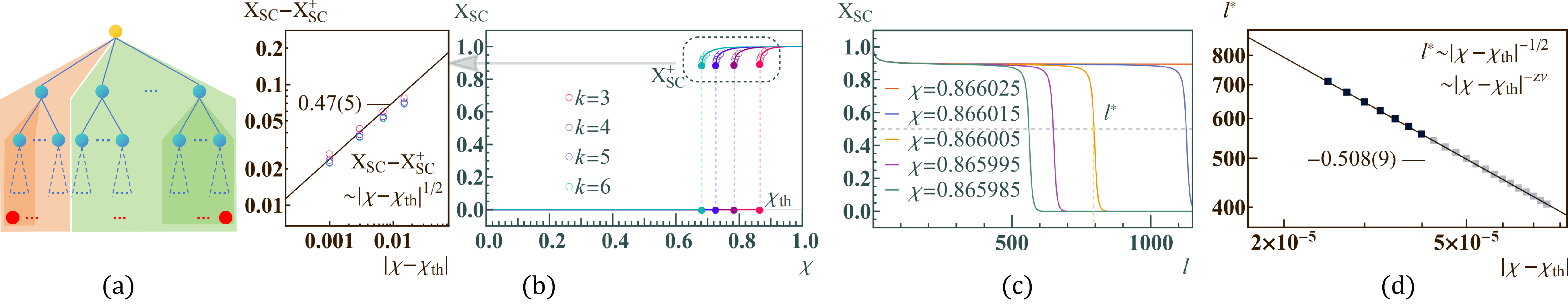}
    \caption{Bethe lattice.
    (a)~A Bethe lattice of degree $k$ (i.e.,~each node is incident to $k$ links) and network depth $l$ (the path length from the yellow node to the red nodes).
    (b)~The sponge-crossing ratio negativity $\mathrm{X}_\text{SC}$ {between $S$ and $T$ for various $k$ (right panel), satisfying the power law $\mathrm{X}_\text{SC}-\mathrm{X}_\text{SC}^{+}\sim |\chi-\chi_{\text{th}}|^{0.47(5)}$ as $\chi\to\chi_{\text{th}}^+$ (left panel).}
    \yaqi{The numerical value~$0.47\pm0.05$ is derived by a linear least-squares fit to the sixteen data points.}
    (c)~
    {When $\chi\to\chi_{\text{th}}^-$,} $\mathrm{X}_\text{SC}$ exhibits a plateau behavior until the network depth $l$ exceeds the correlation length $l^*$  
    {(defined as the depth $l$  at which $\mathrm{X}_\text{SC}=0.5$),}
    after which $\mathrm{X}_\text{SC}$ abruptly drops to zero.
    (d)~Near the critical threshold, we observe $l^* \sim \left|\chi - \chi_{\text{th}}\right|^{-0.508(9)}$, indicating $z\nu \approx 1/2$.
    }\label{Figure2}
  \end{figure*}

\subsection{Mixed-order phase transition}

We start by focusing on the Bethe lattice~\cite{DS2000}, {an infinite tree-like series-parallel network where every node has an identical degree $k>2$} [Fig.~\ref{Figure2}(a)]. 
The root (yellow) corresponds to the set $S$, and all the nodes (red) situated at the outermost boundary are collectively regarded as the set $T$. 
{We define the (finite) network depth $l$ as the number of links in the shortest path between $S$ and $T$ (see Supplemental Information (SI), Sec.~I~A).}

In the $l\to\infty$ limit, the Bethe lattice manifests self-similarity: the dark and light orange shaded regions are identical, as are the dark and light green shaded regions. This enables us to derive the exact self-consistent renormalization group~\cite{J1996,AE2002} equation (SI, Sec.~I~B).
We observe a distinctive phenomenon: as $\chi$ increases from zero, $\mathrm{X}_\text{SC}$ initially remains zero until $\chi=\chi_{\text{th}}$, at which point $\mathrm{X}_\text{SC}$  exhibits an abrupt, discontinuous jump to a positive value $\mathrm{X}_\text{SC}^{+}$ before continuing to increase [Fig.~\ref{Figure2}(b)]. 
For general $k$, we find 
  \begin{eqnarray}
  \label{GGthreshold1}
    \chi_{\text{th}}=\sqrt{1-(k-1)^{-\frac{k-1}{k-2}}\left(k-2\right)}.
  \end{eqnarray}
Moreover, at the critical threshold, $\mathrm{X}_\text{SC}$ jumps to 
\begin{eqnarray}
  \mathrm{X}_\text{SC}^{+}=\sqrt{\frac{1-T(k)}{T^k(k)-T(k)+1}},
\end{eqnarray}
where $T(k)=\left(k-1\right)^{-1/{\left(k-2\right)}}$ (SI, Sec.~I~C). This is in stark contrast with classical or concurrence percolation, where the phase transition is shown to be continuous~\cite{XJS2021,OXSGBJ2022}.
Another difference is that in concurrence percolation, a nontrivial saturation point exists, above which the sponge-crossing entanglement becomes unity (even when the entanglement per link is below one)~\cite{XJS2021,OXSGBJ2022}. Yet, there is no saturation point in the NegPT. Instead, 
we find $1-\mathrm{X}_\text{SC}\sim (1-\chi)^k$
when $\chi\to 1^{-}$, and $\mathrm{X}_\text{SC}$ becomes one if and only if $\chi=1$ (SI, Sec.~I~D).

The discontinuity around $\chi_{\text{th}}$ prompts us to investigate its critical nature, showing that the phase transition is, indeed, {mixed order} (SI, Sec.~I~E). 
Below the critical threshold ($\chi< \chi_{\text{th}}$), we find that $\mathrm{X}_\text{SC}$ initially remains constant as a function of $l$, forming a plateau near $\mathrm{X}_\text{SC}\approx \mathrm{X}_\text{SC}^{+}$ [Fig.~\ref{Figure2}(c)]. This plateau persists until $l$ reaches a finite {value} $l^*$, after which $\mathrm{X}_\text{SC}$ abruptly drops to zero. 
The same plateau is also observed in interdependent physical networks, signaling a dynamic characteristic of mixed-order phase transitions~\cite{Interdependent_net2014}.

By analogy with classical percolation~\cite{DS2000}, this value $l^*$ represents the characteristic length, signifying the typical distance within which two nodes remain connected.
The log-log fitting [Fig.~\ref{Figure2}(d)] further indicates that $l^* \sim \left|\chi - \chi_{\text{th}}\right|^{-1/2}$, suggesting that $l^*$ approaches infinity as $\chi\to\chi_{\text{th}}^{-}$, with a power law exponent $1/2$ that is equal to the chemical-space (``time'') thermal exponent, $z\nu$, as determined by finite-size scaling analysis~\cite{J1988}. Notably, $z\nu\approx 1/2$ differs from the classical and concurrence percolation theories, where $z\nu\approx 1$ for both~\cite{XJS2021,OXSGBJ2022}, indicating that the underlying dynamics of the NegPT is fundamentally unique {and belongs to a different universality class}.

To identify the other critical exponent, $\beta$, note that
akin to concurrence percolation, the series rule of the NegPT also has a ``gauge redundancy''
under the change of variable $\chi\to\chi^x$~\cite{XJS2021,OXSGBJ2022}. However, this redundancy no longer leads to arbitrariness in $\beta$. 
Indeed, for arbitrary $x$, we always have $\left(\mathrm{X}_\text{SC}\right)^x-\left(\mathrm{X}_\text{SC}^{+}\right)^x\simeq x \left(\mathrm{X}_\text{SC}^{+}\right)^{x-1}  \left(\mathrm{X}_\text{SC}-\mathrm{X}_\text{SC}^{+}\right)\sim\left(\chi-\chi_{\text{th}}\right)^{1/2}$ (SI, Sec.~I~E), indicating a fixed critical exponent $\beta=1/2$ [Fig.~\ref{Figure2}(b)] independent of the gauge redundancy $x$. 

Mixed-order phase transitions also occur in classical interdependent percolation~\cite{gao2011robustness,interdependent_network_2015}, prompting a comparison to NegPT. Consider classical percolation in $M$ fully interdependent multilayer systems~\cite{multilayer_network2014,interdependent_mp19} of the Bethe lattice topology, and let $P_\text{SC}^{(1)}$ represent the sponge-crossing probability across a single branch [shaded orange in Fig.~\ref{Figure2}(a)].
Per the excess-degree generating function interpretation~\cite{gene_function_2005}, full interdependency requires that all branches in the $M$ layers must be simultaneously percolating in order to enable the percolation of the next-level branch~\cite{gao2011robustness,interdependent_network_2015}. This  consideration leads to a self-consistent equation (SI, Sec.~I~F) that also features a mixed-order jump from $0$ to $P_\text{SC}^{(1),+}$ at the critical threshold $p=p_{\text{th}}$. As $k\to \infty$, one has $P_\text{SC}^{(1),+}\to 0$, which leads to the expansion (SI, Sec.~I~G)
\begin{equation}
  \fontencoding{T1}
    \fontsize{8}{10}\selectfont
  \label{eq_p_infty}
  P_\text{SC}^{(1),+}=p_{\text{th}}\left[1-M\left(1-P_\text{SC}^{(1),+}\right)^{k-1}+\dots\right]
\end{equation}
near the critical threshold.
Analogously, for the NegPT, we find that the jump of the sponge-crossing ratio negativity across a single branch, $\mathrm{X}_\text{SC}^{(1),+}$, satisfies (SI, Sec.~I~G)
\begin{equation}
\fontencoding{T1}
    \fontsize{8}{10}\selectfont
  \label{eq_chi_infty}
    \left(\mathrm{X}_\text{SC}^{(1),+}\right)^2=\chi_{\text{th}}^2\left[1-\frac{1}{\left(\mathrm{X}_\text{SC}^{(1),+}\right)^2}\left(1-\left(\mathrm{X}_\text{SC}^{(1),+}\right)^2\right)^{k-1}+\dots\right]
\end{equation}
{for $k>2$}.
Comparing Eqs.~\eqref{eq_p_infty}~and~\eqref{eq_chi_infty}, we see that as $k\to \infty$, 
the NegPT mirrors classical percolation in an $(\mathrm{X}_\text{SC}^{(1),+})^{-2}$-layer interdependent Bethe lattice under the change of variable $\chi^2 \to p$. 
In general, however, the two models are driven by fundamentally different mechanisms. While classical interdependent percolation requires an additional degree of freedom (i.e.,~the number of layers $M$) to induce a mixed-order transition, the NegPT does {not} require any such extra freedom.

\yaqi{To verify that the discontinuous nature of NegPT's phase transition extends beyond the Bethe lattice, we also performed numerical simulations of entanglement percolation on a square lattice using the star-mesh transform~\cite{XJS2021}. As shown in Fig.~\ref{Figure3}(a), 
the curve drops abruptly already for finite side length $L$, suggesting a (weakly) discontinuous transition~\cite{discontinuous1994,path-percolation_mhrk24} may indeed exist near the critical threshold $\chi_{\text{th}} \approx 0.715$. Moreover, 
we define the correlation length $\xi$ by $\chi \sim e^{-L/\xi}$ for $\chi < \chi_{\text{th}}$, measuring how fast $\chi$ approaches zero with respect to increasing length $L$ in the subcritical regime.
Checking the scaling $\xi \sim |\chi - \chi_{\text{th}}|^{-\nu}$ yields an exponent $\nu = 0.02(2)$ [Fig.~\ref{Figure3}(b)], a value that is distinct from $\nu \approx 1.3(3)$ for ConPT~\cite{XJS2021} and is close to zero, which is a clear signature of finite correlation length for discontinuous phase transitions~\cite{zotero-3851}.
}

\subsection{Generalized NegPT}

\yaqi{To examine how the mixed-order phase transition depends on the form of the series/parallel rules in Fig.~\ref{figure1}(c), we consider generalizing these rules. Specifically, we introduce two constant factors, $\eta_s$ and $\eta_p$, modifying the series rule to $\tanh r = \eta_s \tanh r_1 \tanh r_2 \dots$ and the parallel rule to $\sinh r = \eta_p \sinh r_1 \cosh r_2 \cosh r_3 \dots$.
These generalized rules may not be physically valid across the entire parameter range for arbitrary $\eta_s$ and $\eta_p$. 
However, our quantitative analysis is restricted to the region near the critical threshold, where these forms serve as a valid local approximation.}
\yaqi{On the Bethe lattice, we find that the type of the phase transition depends on $\eta_p$. Specifically, when $\eta_p < \sqrt{k-1}$, only a mixed-order phase transition can exist; when $\eta_p \ge \sqrt{k-1}$, only a continuous phase transition can exist (SI, Sec.~II~A).}

What could be the possible physical origins of $\eta_s$ and $\eta_p$? First, the entanglement swapping may take a more general form~\cite{GPOVM2002a,GJ2002,gaussian-povm_fm07}, leading to $\eta_s\le 1$ which reflects the generalization of homodyne measurement (SI, Sec.~II~B). Second, the direct-product construction of the concentration operation (\yaqi{see Methods}) appears to rely solely on the exponential tail of TMSVS in the Fock space, $\sim \chi^n |nn\rangle$.
Therefore, we may consider generic CV states that are non-Gaussian at low photon numbers while retaining the exponential tail at high photon numbers. This could yield $\eta_p\le 1$, reflecting the arbitrariness in the low-photon coefficients (SI, Sec.~II~C). \yaqi{Interestingly, comparing this with the $\eta_p < \sqrt{k-1}$ criterion in the Bethe lattice example suggests that only mixed-order transitions can be observed with generic CV operations as in our settings.}

\begin{figure}
    \centering
    \includegraphics[width=230pt]{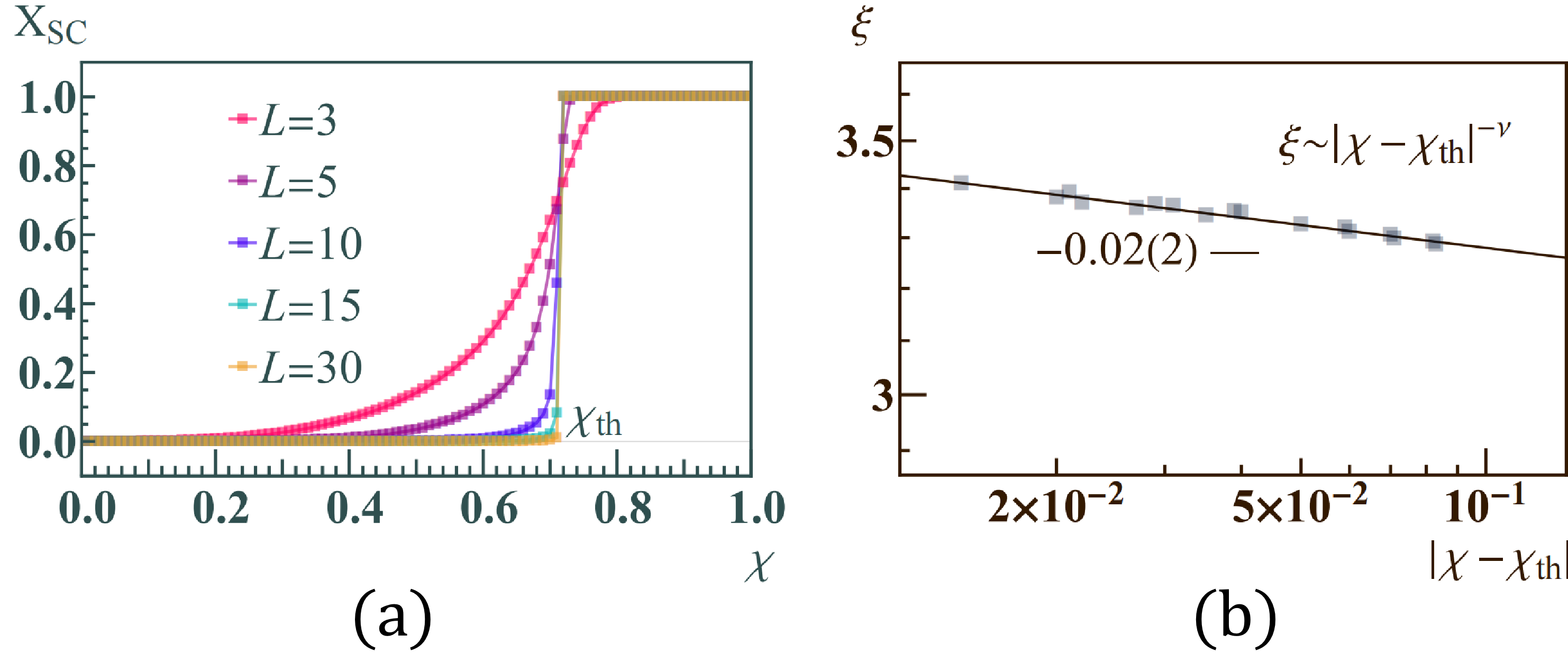}
    \caption{\yaqi{Entanglement percolation in two-dimensional square  lattices.
    (a)~$\text{X}_{\text{SC}}$ for square lattices with different side length $L$.
    (b)~Scaling of the correlation length {$\xi$} near the critical threshold $\chi_{\text{th}}\approx0.715$ follows {$\xi \sim |\chi - \chi_{\text{th}}|^{-\nu}$}, with a fitted critical exponent $\nu = 0.02\pm0.02$.}} 
    \label{Figure3}
\end{figure}

\subsection{\yaqi{Instability under feedback control}}

The identified abrupt transition distinguishes CV-based QNs from DV-based QNs, revealing potential instability inherent in the QN design. 
To illustrate this,
note that in any realistic scenario, quantum states and their entanglement inevitably evolve over time due to environmental influences~\cite{Control_decay_measure1993,control_CV2018,Concurrence_decay_exponential09,Negativity_decay_exponential12,feedback2009,Time_delay_quantum15,Time_delay_quantum16,PID_control1995}. 
As entanglement degradation is often an exponentially fast process~\cite{Concurrence_decay_exponential09,Negativity_decay_exponential12}, we assume the ratio negativity $\chi(t)$ of each link follows an exponential decay, $\chi(t)\sim \exp{\left(-t/\tau\right)}$.
This implies a decay rate proportional to the current negativity, $d \chi(t)/dt\sim-\tau^{-1}\chi(t)$, where the timescale $\tau$ characterizes the system's noise and decoherence.
Although this decay is a local process (as it occurs per link), its collective effect can rapidly compromise global functionality, driving $\mathrm{X}_\text{SC}\to 0$ particularly as $\chi(t)$ breaches $\chi_\text{th}$.

To mitigate such entanglement degradation, a standard approach may utilize measurement-based quantum feedback control~\cite{feedback2009}. This involves constantly monitoring the system's output
$\mathrm{X}_\text{SC}(t)$ and comparing it to a desired target value $\mathrm{X}^\text{target}_\text{SC}\gg 0$. Any difference {$\epsilon(t)=\mathrm{X}^\text{target}_\text{SC} - \mathrm{X}_\text{SC}(t)$} triggers a feedback mechanism to actively adjust system settings, counteracting the entanglement decay and maintaining $\chi(t)$ at operational levels. 
{We further assume that such feedback mechanism contains a time delay due to processing and transmission time across the global scale~\cite{Time_delay_quantum15,Time_delay_quantum16}. Taken all assumptions together, we may approximate the full system dynamics using a first-order-plus-time-delay (FOPTD) model~\cite{PID_control1995}:}
\begin{eqnarray}\label{eq-FOPTD}
    \frac{d \chi(t)}{d t}&=&-\tau^{-1}\chi(t)+u(t-T_0).
\end{eqnarray}
Here, the first term represents decay of the ratio negativity over time and $u(t)$ signifies the feedback action applied after a delay $T_0$.
The feedback signal $u(t)$ is determined by the observed deviation $\epsilon(t)$
(SI, Sec.~III~A). In a physical system, for instance, $u(t)$ might correspond to increasing the pump laser power for down-conversion sources, boosting the squeezing as well as entanglement to compensate for ongoing losses~\cite{down_conversion2005,down_conversion2010,down_conversion2017,control_CV2018}.

\begin{figure}[t!]
    \centering
     \includegraphics[width=250pt]{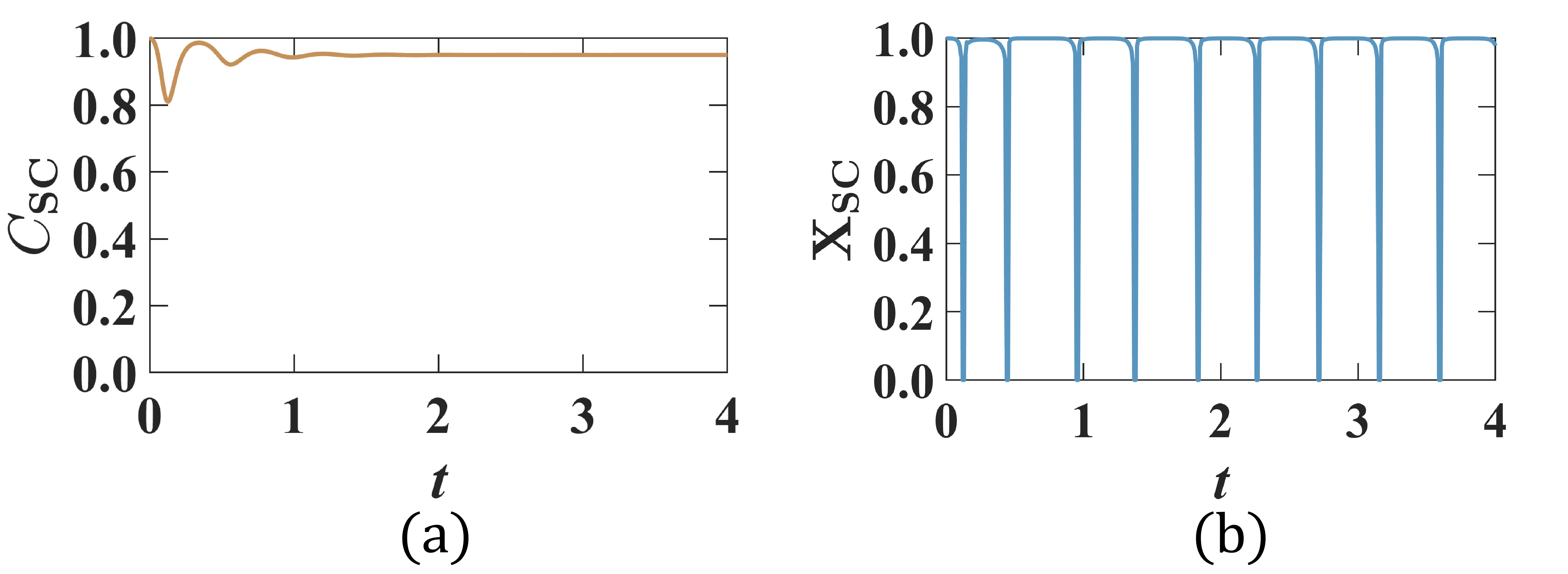}
    \caption{Feedback stabilization of QN  against entanglement decay. 
    (a)~Under the same feedback control [Eq.~\eqref{eq-FOPTD}], 
    the DV-based QN ($k=3$ Bethe lattice) exhibits rapid stabilization; (b)~whereas the CV-based QN exhibits long-term ``on/off''
    instability, a direct result of the abrupt drop in Fig.~\ref{Figure2}(b).
    }\label{Figure4}
\end{figure}

Our aim is to understand how the sharp, discontinuous nature of $\mathrm{X}_\text{SC}$ vs.~$\chi$ [Fig.~\ref{Figure2}(b)] undermines the feedback efficacy.
Simulating an FOPTD model with identical $\tau$, $T_0$, and controller $u(t)$ on both DV- and CV-based Bethe-lattice QNs reveals starkly different behaviors. 
Under the concurrence percolation~\cite{XJS2021,OXSGBJ2022}, 
the sponge-crossing concurrence $C_{\text{SC}}(t)$ for DV-based states recovers smoothly under feedback [Fig~\ref{Figure4}(a)]. \yaqi{In contrast, the sponge-crossing ratio negativity $\mathrm{X}_{\text{SC}}(t)$ for CV-based states, dictated by its mixed-order transition, exhibits unstable ``on/off'' oscillations using the same $u(t)$ [Fig.~\ref{Figure4}(b)].}

\yaqi{Crucially, this instability is {not} a result of the limitation of the FOPTD model's capability. Indeed, under Eq.~\eqref{eq-FOPTD}, $\chi(t)$ is successfully stabilized within a narrow range and only undergoes small perturbations near the critical threshold $\chi_{\text{th}}\approx 0.866$  (SI, Sec.~III~B).
Therefore, 
the instability of $\mathrm{X}_{\text{SC}}$ comes directly from 
the sharp nature of $\mathrm{X}_\text{SC}$ vs.~$\chi$ [Fig.~\ref{Figure2}(b)]. We also find that the instability persists across a wide range settings for $u(t)$ (SI, Sec.~III~C),
indicating that the CV system stabilization is substantially more demanding than for DV counterparts,
calling for a more careful feedback design for CV-based QNs.}

\section*{Discussion}

{Despite the rich studies on mixed-order phase  transitions~\cite{mix-order-phase-transit_m23,Interdependent_net2014,interdependent_bglvfh23,mix-order-phase-transit_pbplrbh24,k_core2024,mix-order-phase-transit_atc17}, the underlying mechanisms remain not fully understood.} The traditional perspective of spinodal points~\cite{spinodal_ku83} seems not applicable here, since our negativity percolation model---based solely on series and parallel rules---lacks a manifestation of {hysteresis} or metastability 
(which is also commonly absent in interdependent percolation~\cite{gao2011robustness,interdependent_network_2015}).
Consequently, the emergence of singularity may not necessarily be described by a partition function as for spinodal points~\cite{spinodal_ku83}.
{This heralds future theoretical investigations  into how the interplay of series and parallel rules can induce mixed-order transitions. Such studies could unveil novel phase transition landscapes potentially observable in quantum optical experiments implementing our G-G DET scheme.}

For practical QN design, several unresolved questions remain. For instance, unlike in DV systems~\cite{XYJSA2023}, our CV parallel rule (entanglement concentration from Ref.~\cite{MRN}) is not optimal. Understanding the feasibility and implementation limits of entanglement concentration has been a focus for CV system designs. This understanding is becoming even more urgent as we approach the deployment of larger-scale, realistic QNs.

\section*{Methods}

This section presents the foundational tools for studying negativity percolation on Gaussian quantum networks. Further details are provided in the Supplementary Information.

\setcounter{subsection}{0}

\subsection{G-G entanglement swapping}\label{sec-GGswapp}

{We employ Gaussian-to-Gaussian entanglement swapping for pure Gaussian states as proposed in Ref.~\cite{P2002}. 
Consider the simplest case of three nodes ($S$, $R$, and $T$) sharing four modes, labeled as 1 through 4, corresponding to one mode on $S$, two modes on $R$, and one mode on $T$. 
When a TMSVS with squeezing parameters $r_1$ is shared by $S$ and $R$ and another TMSVS with squeezing parameter $r_2$ is shared by $R$ and $T$, the CV Bell measurement is performed on modes $2$ and $3$ by detecting the output modes of a $50:50$ beam splitter with these two modes as the input modes. The quadratures $\hat{x}_u=(\hat{x}_2-\hat{x}_3)/\sqrt{2}$ and $\hat{p}_v=(\hat{p}_2+\hat{p}_3)/\sqrt{2}$ are measured, where $\hat{x}_m$ and $\hat{p}_m$ denote the ``position'' and ``momentum'' operators of mode $m$ ($m=2,3$).
Then, the quadratures for modes $1$ and $4$ undergo displacements, $\hat{x}_m'=\hat{x}_m-g_m\hat{x}_u/\sqrt{2}$ and $\hat{p}_m'=\hat{p}_m+g_m\hat{p}_v/\sqrt{2}$ ($m=1,4$). With the right choice of the gains,
\begin{eqnarray}
    g_1=\frac{2\sinh 2r_1}{\cosh 2r_1+\cosh 2r_2},\\
    g_4=\frac{2\sinh 2r_2}{\cosh 2r_1+\cosh 2r_2},
\end{eqnarray}
the output is a new TMSVS with  squeezing parameter $r$ that satisfies
\begin{eqnarray}
    \tanh r=\tanh r_1\tanh r_2.
\end{eqnarray}}
This protocol can be extended to a one-dimensional chain, consisting of nodes $S$, $T$, and $N-1$ intermediate nodes $R_{1},\dots,R_{N-1}$ (Fig.~\ref{Swapping-n}), leading to the following theorem:
\begin{theorem}
     Given $N$ TMSVSs $|\psi^{r_{1}}\rangle$, $|\psi^{r_{2}}\rangle$, ..., $|\psi^{r_{N}}\rangle$, where $|\psi^{r_{1}}\rangle$ is shared by two parties $S$ and R$_1$, $|\psi^{r_{2}}\rangle$ is shared by R$_1$ and R$_2$, ..., and $|\psi^{r_{N}}\rangle$ is shared by R$_{N-1}$ and $T$, the TMSVSs can be  deterministically converted to a single TMSVS $|\psi^{r}\rangle$ between $S$ and $T$ by LOCC, where
\begin{eqnarray}
   \label{1D}
     \tanh{r}=\prod_{j=1}^{N}\tanh{r_{j}}.
  \end{eqnarray}
  \end{theorem}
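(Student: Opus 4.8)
The plan is to prove the product formula by induction on the number of links $N$, treating the three-node swapping construction established above both as the base case and as the elementary merge operation. For $N=1$ the statement is vacuous, and for $N=2$ it is exactly the explicit protocol just described: a CV Bell measurement on the two relay modes followed by gain-matched displacements at $S$ and $T$ deterministically produces a TMSVS $|\psi^r\rangle$ with $\tanh r=\tanh r_1\tanh r_2$. The feature I would carefully record here is not merely the value of $r$, but that the output is again a genuine two-mode squeezed vacuum state, so that the identical protocol can be re-applied to it.

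For the inductive step I would assume the claim for chains of $N-1$ links and then reduce. First I would apply the two-link swapping at $R_1$ to the pair $|\psi^{r_1}\rangle\otimes|\psi^{r_2}\rangle$, which by the base case yields a single TMSVS $|\psi^{r'}\rangle$ shared directly between $S$ and $R_2$ with $\tanh r'=\tanh r_1\tanh r_2$. This leaves a chain of $N-1$ TMSVSs, namely $|\psi^{r'}\rangle,|\psi^{r_3}\rangle,\dots,|\psi^{r_N}\rangle$, strung across $S,R_2,\dots,T$. Invoking the induction hypothesis then gives a final TMSVS $|\psi^r\rangle$ between $S$ and $T$ with
\[
\tanh r=\tanh r'\prod_{j=3}^{N}\tanh r_j=\tanh r_1\tanh r_2\prod_{j=3}^{N}\tanh r_j=\prod_{j=1}^{N}\tanh r_j,
\]
as claimed. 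All operations invoked---homodyne detection at the relays, displacements at the endpoints---are local and classically coordinated, and the displacement corrections make the final state independent of the random measurement outcomes, so the composite map is deterministic LOCC, as required.

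The hard part, and the only step I expect to require real care, is the closure property flagged in the base case: that the TMSVS family is stable under the swapping operation. Because the induction feeds the output of one merge into the next, the argument collapses unless each intermediate state is again an \emph{exact} TMSVS, so that its covariance matrix retains the canonical two-mode squeezed form and the gain choices for the subsequent displacement remain valid. I would therefore verify at the level of covariance matrices that the Bell measurement and displacement leave no residual single-mode squeezing or spurious cross-correlation that would spoil the tensor-product structure presumed by the next merge. Once this is secured, the freedom in the merge order follows immediately from the symmetry of $\prod_j\tanh r_j$, giving the commutativity of the series rule noted in the remark preceding the theorem.
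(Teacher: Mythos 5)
Your proposal is correct and matches the paper's (implicit) argument: the paper establishes the three-node protocol explicitly and then obtains the $N$-link theorem precisely by iterating it along the chain, which is what your induction formalizes. Your emphasis on the closure property---that each swap outputs an \emph{exact} TMSVS so the protocol can be re-applied---is exactly the point the paper's construction guarantees (via the gain-matched displacements of Ref.~\cite{P2002}), so nothing is missing.
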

Notably, the final result is
independent of the order of execution. This property, as we will see, does not apply to the concentration protocol.

We may also express Eq.~\eqref{1D} in terms of the ratio negativity. Let $ \chi_{j}\in [0,1]$ be the ratio negativity of each input TMSVS $|\psi^{r_j}\rangle$, and let $\chi\in [0,1]$ be the ratio negativity of the final output TMSVS $|\psi^{r}\rangle$. We define the series function
  \begin{eqnarray}
  \label{eq_series}
    \text{seri}(\chi_1,\chi_2,...):=\chi=\prod^{N}_{j=1}\chi_{j},
  \end{eqnarray}
which corresponds to the series rule for the G-G DET.

\begin{figure}[t!]
  \centering
  \includegraphics[width=3in]{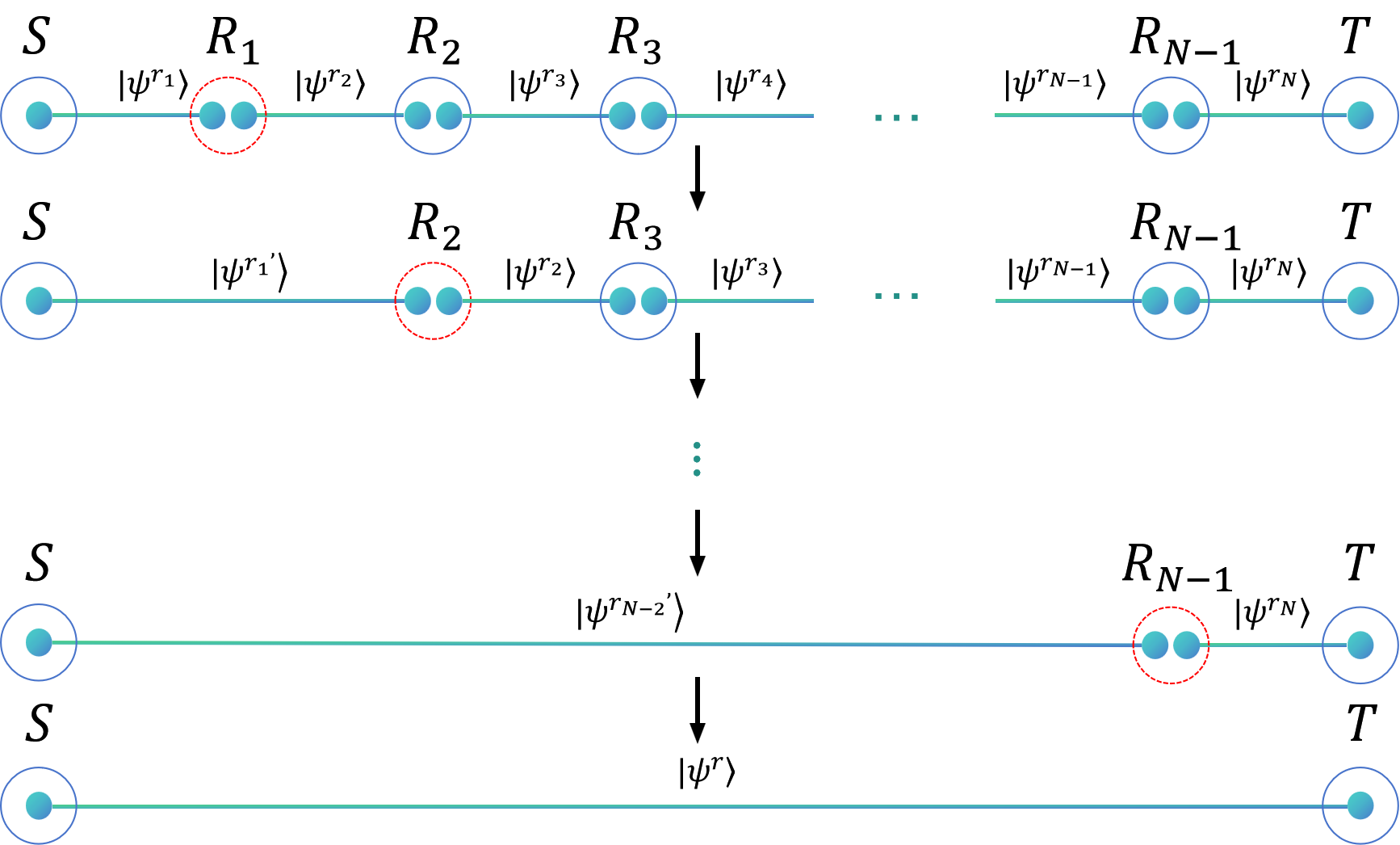}\\

   \caption{Continuous-variable entanglement swapping. 
   Given $N$ TMSVSs $|\psi^{r_{1}}\rangle,|\psi^{r_{2}}\rangle,\dots,|\psi^{r_{N}}\rangle$, the final state $|\psi^{r}\rangle$ between $S$ and $T$ derived by entanglement swapping is a TMSVS, where the squeezing parameter $r$ satisfies Eq.~\eqref{1D}.}   \label{Swapping-n}
\end{figure}

\subsection{G-G entanglement concentration}\label{sec-GGconcen}

In DV systems, the process of entanglement concentration enables the deterministic conversion of two (or more) bipartite pure states into a single bipartite pure state with enhanced entanglement through LOCC~\cite{CHSB1996}. We aim to do the same for TMSVSs, deterministically concentrating two (or more) TMSVSs into another one with higher entanglement (larger squeezing parameter). We begin with some preliminaries on deterministic LOCC:

We now introduce the majorization theory between infinite-dimensional vectors. Let $\vec{x}$ and $ \vec{y}$ be positive (each element is a non-negative real number), infinite-dimensional vectors. $\vec{x}$ is majorized by $\vec{y}$, denoted as $\vec{x}\prec\vec{y}$, iff
 \begin{eqnarray}
   \Sigma_{j=1}^{k}x_{j}^{\downarrow}\leq\Sigma_{j=1}^{k}y_{j}^{\downarrow}
  \end{eqnarray}
holds for all positive integers $k$ and the equality holds for $k=\infty$.
Here, $x_j^{\downarrow}$ ($y_{j}^{\downarrow}$) is the $j$th element of vector $\vec{x}$ ($\vec{y}$) arranged in descending order. Majorization theory implies that a \yaqi{bipartite pure} state $|\psi\rangle$ can be deterministically transformed into another state $|\psi'\rangle$  using LOCC iff $\vec{\lambda}\prec\vec{\lambda}'$, {where $\vec{\lambda}$ and $\vec{\lambda'}$ are the Schmidt-value vectors of $|\psi\rangle$
 and $|\psi'\rangle$, with $\vec{\lambda}=(\lambda_1,\lambda_2,\cdots)^{\mathrm{T}}$ and $\vec{\lambda'}=(\lambda_1',\lambda_2',\cdots)^{\mathrm{T}}$}~\cite{Nielsen1999} where $\sum_{j}\lambda_j=\sum_{j}\lambda_j'=1$.
For a TMSVS $|\psi^r\rangle$, its Schmidt-value vector $\vec{\lambda}(r)$ has entries $\lambda_n=\left(1-\tanh^2 r\right) \tanh^{2n} (r)$. 
The  majorization relation $\vec{x}\prec\vec{y}$ holds iff there exists a column-stochastic matrix $\mathbf{D}$ (whose column sum $=1$ for each column, and row sum \yaqi{$\leq 1$} for each row), such that $\vec{x}=\mathbf{D} \vec{y}$.
Therefore, to construct deterministic LOCC for two TMSVSs with Schmidt-value vectors $\vec{x}$ and $\vec{y}$, respectively, conceptually one only needs to find a column-stochastic matrix $\mathbf{D}$ that satisfies $\vec{x}=\mathbf{D} \vec{y}$.

Here, we adopt the exact construction of $\mathbf{D}$ given in Ref.~\cite{MRN}, where  $\mathbf{D}$ is constructed as a tensor product: $\mathbf{D}=\mathbf{D}^{\mathcal{L}}\otimes\mathbf{D}^{\mathcal{A}}$. The matrix $\mathbf{D}^{\mathcal{L}}$ is given by
\begin{equation}\label{eq-DL}
    \mathbf{D}^{\mathcal{L}}=
    \begin{bmatrix}
1      & 1-\eta          & (1-\eta)^2              & (1-\eta)^3                & \dots \\
0      & {1\choose1}\eta & {2\choose1}\eta(1-\eta) & {3\choose1}\eta(1-\eta)^2 & \dots \\
0      & 0               & {2\choose2}\eta^2       & {3\choose2}\eta^2(1-\eta) & \dots \\
0      & 0               & 0                       & {3\choose3}\eta^3         & \dots \\
\vdots & \vdots          & \vdots                  & \vdots                    &\ddots
   \end{bmatrix}.
\end{equation}
Similarly, the matrix $\mathbf{D}^{\mathcal{A}}$ is given by
\begin{eqnarray}\label{eq-DA}
    \mathbf{D}^{\mathcal{A}}=
    \begin{bmatrix}
a      & 0                  & 0                & 0       & \dots \\
ab     & a^2                & 0                & 0       & \dots \\
a b^2  & {2\choose1}a^2 b   & a^3              & 0       & \dots \\
a b^3  & {3\choose1}a^2 b^2 & {3\choose2}a^3 b & a^4     & \dots \\
\vdots & \vdots             & \vdots           & \vdots  &\ddots
   \end{bmatrix},
\end{eqnarray}
where $G>0$, $a=1/G$ and $b=1-a$.

Note that the forms of these matrices are, indeed, motivated by the operation of a pure-loss channel with transmissivity $\eta\in(0,1)$ and a quantum-limited amplifier with a gain $G>1$ for thermal lights, respectively. This physical analogy may provide some insights on how to construct the exact LOCC using non-standard optical components.

The introduction of these two matrices is particularly useful for TMSVS, since their tensor product $\mathbf{D}^{\mathcal{L}}\otimes\mathbf{D}^{\mathcal{A}}$ can transform the Schmidt-value vector of the tensor product of two TMSVSs, $\vec{\lambda}(r_1')\otimes\vec{\lambda}(r_2')$, into another vector of the same form $\vec{\lambda}(r_1)\otimes\vec{\lambda}(r_2)$, where $\tanh r_1'=\tanh r_1/\sqrt{\eta+(1-\eta)\tanh^2 r_1}$ and $\tanh r_2'=\sqrt{1-G(1-\tanh^2 r_2)}$.
The majorization relation $\vec{\lambda}(r_1)\otimes\vec{\lambda}(r_2)\prec\vec{\lambda}(r_1')\otimes\vec{\lambda}(r_2')$ holds if $\mathbf{D}^{\mathcal{L}}\otimes\mathbf{D}^{\mathcal{A}}$ is column-stochastic, which is true iff $\eta G\geq1$. 
This gives rise to the following Lemma:
\begin{lemma}
\label{Lemma2}(Theorem 1 in Ref.~\cite{MRN})
Let $|\Psi\rangle$, $|\Psi'\rangle$ be two $(2+2)$-mode pure Gaussian states, with decreasingly ordered squeezing vectors $\vec{r}^{\downarrow}=(r_{1},r_{2})$ and $\vec{r'}^{\downarrow}=(r_{1}',r_{2}')$ such that the two components of vector $(r_{1}'-r_{1},r_{2}'-r_{2})$ have opposite signs. Then $|\Psi\rangle$ can be transformed into $|\Psi'\rangle$ using deterministic non-Gaussian LOCC if
    \begin{eqnarray}
     \label{distill2}
     \frac{\sinh(r_{1}+ r_{2})\pm\sinh(r_{1}-r_{2})}
       {\sinh(r_{1}'+r_{2}')\pm\sinh(r_{1}'-r_{2}')}\geq1,
    \end{eqnarray}
  where $\pm$ follows the sign of $r_{1}'-r_{1}$.
 \end{lemma}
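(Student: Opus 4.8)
The plan is to reduce the $(2+2)$-mode convertibility question to a pure majorization statement about two tensored TMSVSs, and then to translate the single scalar criterion $\eta G\ge 1$ into the claimed $\sinh$ inequality. First I would invoke the mode decomposition so that, up to local Gaussian unitaries (which do not affect LOCC-convertibility), one may write $|\Psi\rangle=|\psi^{r_1}\rangle\otimes|\psi^{r_2}\rangle$ and $|\Psi'\rangle=|\psi^{r_1'}\rangle\otimes|\psi^{r_2'}\rangle$. By Nielsen's majorization theorem the transformation $|\Psi\rangle\to|\Psi'\rangle$ holds iff the Schmidt-value vectors obey $\vec{\lambda}(r_1)\otimes\vec{\lambda}(r_2)\prec\vec{\lambda}(r_1')\otimes\vec{\lambda}(r_2')$, where $\lambda_n(r)=(1-\tanh^2 r)\tanh^{2n}r$. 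Thus it suffices to exhibit a single column-stochastic matrix carrying the primed spectrum to the unprimed one.

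Next I would use the explicit loss and amplifier matrices $\mathbf{D}^{\mathcal{L}}$ and $\mathbf{D}^{\mathcal{A}}$ stated above, together with the established fact that $\mathbf{D}^{\mathcal{L}}$ sends one geometric Schmidt spectrum to another with $\tanh r_1'=\tanh r_1/\sqrt{\eta+(1-\eta)\tanh^2 r_1}$, while $\mathbf{D}^{\mathcal{A}}$ does likewise with $\tanh r_2'=\sqrt{1-G(1-\tanh^2 r_2)}$. Inverting these relations I would solve directly for the channel parameters in terms of the squeezing, obtaining the clean closed forms $\eta=\sinh^2 r_1/\sinh^2 r_1'$ and $G=\cosh^2 r_2/\cosh^2 r_2'$ via the identities $\tanh^2 r/(1-\tanh^2 r)=\sinh^2 r$ and $1-\tanh^2 r=\mathrm{sech}^2 r$. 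The admissible ranges $\eta\in(0,1)$ and $G>1$ then follow automatically from $r_1'\ge r_1$ and $r_2\ge r_2'$.

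The crux is the column-stochasticity (equivalently, majorization) criterion $\eta G\ge 1$. Substituting the closed forms gives $\sinh^2 r_1\cosh^2 r_2\ge\sinh^2 r_1'\cosh^2 r_2'$; taking positive roots and applying the product-to-sum identity $\sinh r_1\cosh r_2=\tfrac12[\sinh(r_1+r_2)+\sinh(r_1-r_2)]$ yields precisely the ``$+$'' branch of the claimed inequality. To obtain the ``$-$'' branch I would simply interchange the roles of the two modes, applying $\mathbf{D}^{\mathcal{L}}$ to mode $2$ and $\mathbf{D}^{\mathcal{A}}$ to mode $1$; this swaps the indices and replaces the identity above by $\sinh r_2\cosh r_1=\tfrac12[\sinh(r_1+r_2)-\sinh(r_1-r_2)]$, producing the ``$-$'' inequality. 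The opposite-sign hypothesis on $(r_1'-r_1,\,r_2'-r_2)$ is exactly what forces one mode to undergo loss and the other amplification, so that precisely one of the two constructions applies, and the sign of $r_1'-r_1$ selects the branch, matching the stated $\pm$ convention.

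I expect the main obstacle to be the bookkeeping of this two-branch structure rather than any single computation: one must check that the opposite-sign condition, combined with the descending orderings $r_1\ge r_2$ and $r_1'\ge r_2'$, leaves exactly the two admissible loss/amplifier assignments and that each yields a valid $(\eta,G)$ pair in the correct range. The only genuinely analytic input is the geometric-to-geometric action of $\mathbf{D}^{\mathcal{L}}$ and $\mathbf{D}^{\mathcal{A}}$ (a resummation of binomial-weighted geometric series), but since the resulting parameter relations are already supplied, here this step can be cited rather than re-derived, leaving the $\sinh$-translation and case analysis as the substantive content.
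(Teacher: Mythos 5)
Your proposal is correct and follows essentially the same route the paper takes (via the surrounding text citing Ref.~\cite{MRN}): Nielsen majorization, the explicit loss/amplifier matrices $\mathbf{D}^{\mathcal{L}}\otimes\mathbf{D}^{\mathcal{A}}$ with their geometric-spectrum actions, and the column-stochasticity criterion $\eta G\geqslant 1$. Your added algebra---inverting to $\eta=\sinh^{2}r_{1}/\sinh^{2}r_{1}'$, $G=\cosh^{2}r_{2}/\cosh^{2}r_{2}'$, and using $2\sinh r_{1}\cosh r_{2}=\sinh(r_{1}+r_{2})+\sinh(r_{1}-r_{2})$ together with the mode-swapped case for the ``$-$'' branch---correctly fills in the translation to Eq.~\eqref{distill2} that the paper leaves implicit.
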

 
Lemma~\ref{Lemma2} provides a sufficient condition to convert two TMSVSs $|\psi^{r_1}\rangle$ and $|\psi^{r_2}\rangle$ into another one with squeezing parameter $r_1'>r_1$ for the case of $r_2'=0$. To be exact, the maximum value $r$ of $r_{1}'$ that satisfies Eq.~\eqref{distill2} is given by $r=\sinh^{\mbox{-}1}(\sinh{r_{1}}\cosh{r_{2}})>r_{1}$ and $r_2'=0$ \yaqi{with $G=\eta^{-1}=(1-\tanh r_2)^{-1}$}, implying that entanglement is concentrated.
Thus, consider a $(2+2)$-mode pure Gaussian state $|\psi^{r_{1}}\rangle\otimes |\psi^{r_{2}}\rangle$ with decreasingly ordered squeezing vectors $\vec{r}^{\downarrow}=(r_{1},r_{2})$. This state can be deterministically converted into a single TMSVS $|\psi^{r}\rangle$ by non-Gaussian LOCC, where
     \begin{eqnarray}
      \label{(2+2)-CONCEN}
       \sinh{r}=\sinh{r_{1}}\cosh{r_{2}}.
     \end{eqnarray}
\yaqi{That is, according to Ref.~\cite{MRN}, the bipartite state $|\psi^{r_{1}}\rangle\otimes |\psi^{r_{2}}\rangle$ can be converted into $|\psi^{r}\rangle$ ($r_1<r$) via deterministic LOCC when the reduced state of $|\psi^r\rangle$ can be transformed into the reduced state of $|\psi^{r_1}\rangle$ through a pure loss channel with intensity transmittance $\eta=1-\tanh^2 r_2$ ($r_2\leq r_1$).}

Equation~\eqref{(2+2)-CONCEN} is the building block for scaling the concentration protocol.  Let $|\Psi\rangle$ be a pure Gaussian state and, w.l.o.g.,~write $|\Psi\rangle=\mathop{\bigotimes}\nolimits_{k=1}^{K}|\psi^{r_{k}}\rangle$ $(r_1\geq r_2\geq\cdots\geq r_K>0)$ with squeezing vector $\vec{r}^{\downarrow}=\left(r_1,r_2,\dots,r_K\right)$ by Ref.~\cite{Modewise2003}. 
By iteratively applying Eq.~\eqref{(2+2)-CONCEN}  $K-1$ times on $|\Psi\rangle$ (Fig.~\ref{FIG-concen1}), the pure Gaussian state $|\Psi\rangle$ is deterministically concentrated into a single TMSVS.
Only LOCC is used throughout the entire process, giving rise to the following theorem:
\begin{theorem}\label{N+M-CONCEN} A pure Gaussian state $|\Psi\rangle_{AB}=\mathop{\bigotimes}\nolimits_{k=1}^{K}|\psi^{r_{k}}\rangle_{A_{k}B_{k}}$ with $K$-dimensional squeezing parameter vector $\vec{r}^{\downarrow}=(r_{1},\dots,r_{K})$ (where \yaqi{$r_{1}\geq \dots\geq r_{K}>0$})
can be deterministically converted to a TMSVS $|\psi^{r}\rangle$ by non-Gaussian LOCC, where {$r$ satisfies the series rule}
    \begin{eqnarray}
     \label{(N+M)-CONCEN}
     \sinh{r}=\sinh{r_{1}}\prod_{k=2}^{K}\cosh{r_{k}}.
    \end{eqnarray}
\end{theorem}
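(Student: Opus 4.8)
The plan is to prove Theorem~\ref{N+M-CONCEN} by induction on the number of modes $K$, using the two-TMSVS rule Eq.~\eqref{(2+2)-CONCEN} as the elementary building block and applying it $K-1$ times in a carefully chosen order. The base case $K=1$ is vacuous, since the empty product is unity and $\sinh r=\sinh r_1$ requires no operation. The case $K=2$ is exactly Eq.~\eqref{(2+2)-CONCEN}, which itself follows from Lemma~\ref{Lemma2} by setting $r_2'=0$ and taking $r_1'$ to its maximal admissible value, giving $\sinh r=\sinh r_1\cosh r_2$ via deterministic non-Gaussian LOCC.

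For the inductive step, I would assume the formula holds for any ordered product of $K-1$ TMSVSs and start from the ordered input $\vec{r}^{\downarrow}=(r_1,\dots,r_K)$ with $r_1\geq\cdots\geq r_K>0$. First I would concentrate the two largest modes $\ket{\psi^{r_1}}\otimes\ket{\psi^{r_2}}$ into a single TMSVS $\ket{\psi^{r'}}$ by Eq.~\eqref{(2+2)-CONCEN}, so that $\sinh r'=\sinh r_1\cosh r_2$. Because this concentration is realized by LOCC acting only on modes $A_1,A_2$ at party $A$ and $B_1,B_2$ at party $B$ (with the residual vacuum discarded), it tensors trivially with the identity on the spectator modes $A_3B_3,\dots,A_KB_K$, leaving the $(K-1)$-fold product $\ket{\psi^{r'}}\otimes\bigotimes_{k=3}^{K}\ket{\psi^{r_k}}$. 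Applying the induction hypothesis to this product yields a single TMSVS $\ket{\psi^{r}}$ with $\sinh r=\sinh r'\prod_{k=3}^{K}\cosh r_k=\sinh r_1\cosh r_2\prod_{k=3}^{K}\cosh r_k=\sinh r_1\prod_{k=2}^{K}\cosh r_k$, which is precisely Eq.~\eqref{(N+M)-CONCEN}. Since each elementary step is deterministic non-Gaussian LOCC and a composition of LOCC maps is again LOCC, the full protocol is deterministic non-Gaussian LOCC, as claimed.

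The crux of the argument—and the step requiring genuine care—is justifying the invocation of Eq.~\eqref{(2+2)-CONCEN} at every iteration, since that rule is valid only under the ordering $r_1\geq r_2$, with the larger parameter placed in the $\sinh$ factor. I expect this to be the main obstacle, and it is resolved by a monotonicity observation: because $\cosh r_2\geq 1$, the concentrated parameter obeys $\sinh r'=\sinh r_1\cosh r_2\geq\sinh r_1$, so $r'\geq r_1\geq r_3\geq\cdots\geq r_K$. Thus the freshly created TMSVS remains the largest element of the collection, the decreasing order is preserved, and the two-mode rule may legitimately be re-applied at the next fold. This ordering-preservation is exactly what encodes the non-commutativity flagged after Eq.~\eqref{(2+2)-CONCEN}: by always folding each successive $\cosh r_k$ into the single accumulating $\sinh$ factor—rather than, for instance, combining the two smallest modes first, which could violate the ordering and alter the resulting expression—one arrives at the stated form with the global maximum $r_1$ sitting in $\sinh r_1$.

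I would note that the theorem as stated is an \emph{achievability} claim: it asserts only that this particular value of $r$ is reachable by the iterative scheme. The \emph{optimality} of placing the largest $r_1$ in the $\sinh$ factor, alluded to in the main text, is a separate question about maximizing the output entanglement over admissible orderings and lies beyond what the induction above establishes.
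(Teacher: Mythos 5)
Your proof is correct and takes essentially the same approach as the paper's: iterating the two-mode rule of Eq.~\eqref{(2+2)-CONCEN}, always folding the next $\cosh r_k$ into the single accumulating $\sinh$ factor. Your explicit ordering-preservation step ($\sinh r'=\sinh r_1\cosh r_2\geqslant \sinh r_1$, so the freshly created TMSVS remains the largest and the two-mode rule stays applicable) is precisely what the paper's proof records implicitly through the inequality $\sinh r_1'>\sinh r_1$.
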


\begin{proof}
We regard $|\psi\rangle$ as the tensor $\otimes_{k=1}^{K}|\psi^{r_{k}}\rangle$ of TMSVSs.
Firstly, state $|\psi^{r_{1}'}\rangle$ with
  \begin{eqnarray*}
    \sinh{r_{1}'}=\sinh{r_{1}}\cosh{r_{2}}>\sinh{r_{1}}.
  \end{eqnarray*}
can be generated from $|\psi^{r_{1}}\rangle\otimes|\psi^{r_{2}}\rangle$
by entanglement concentration.
Next, executing the scheme again on state $|\psi^{r_{1}'}\rangle\otimes|\psi^{r_{3}}\rangle$, we get $|\psi^{r_{2}'}\rangle$ for
  \begin{eqnarray*}
    \sinh{r_{2}'}=\sinh{r_{1}}\prod_{k=2}^{3}\cosh{r_{k}}.
  \end{eqnarray*}
Similarly, applying the scheme iteratively, we derive a state $|\psi^{r}\rangle$ with
  \begin{eqnarray*}
    \sinh{r}=\sinh{r_{1}}\prod_{k=2}^{K}\cosh{r_{k}}.
  \end{eqnarray*}
\end{proof}

\begin{figure}[t!]
    \centering
    \includegraphics[width=3.3in]{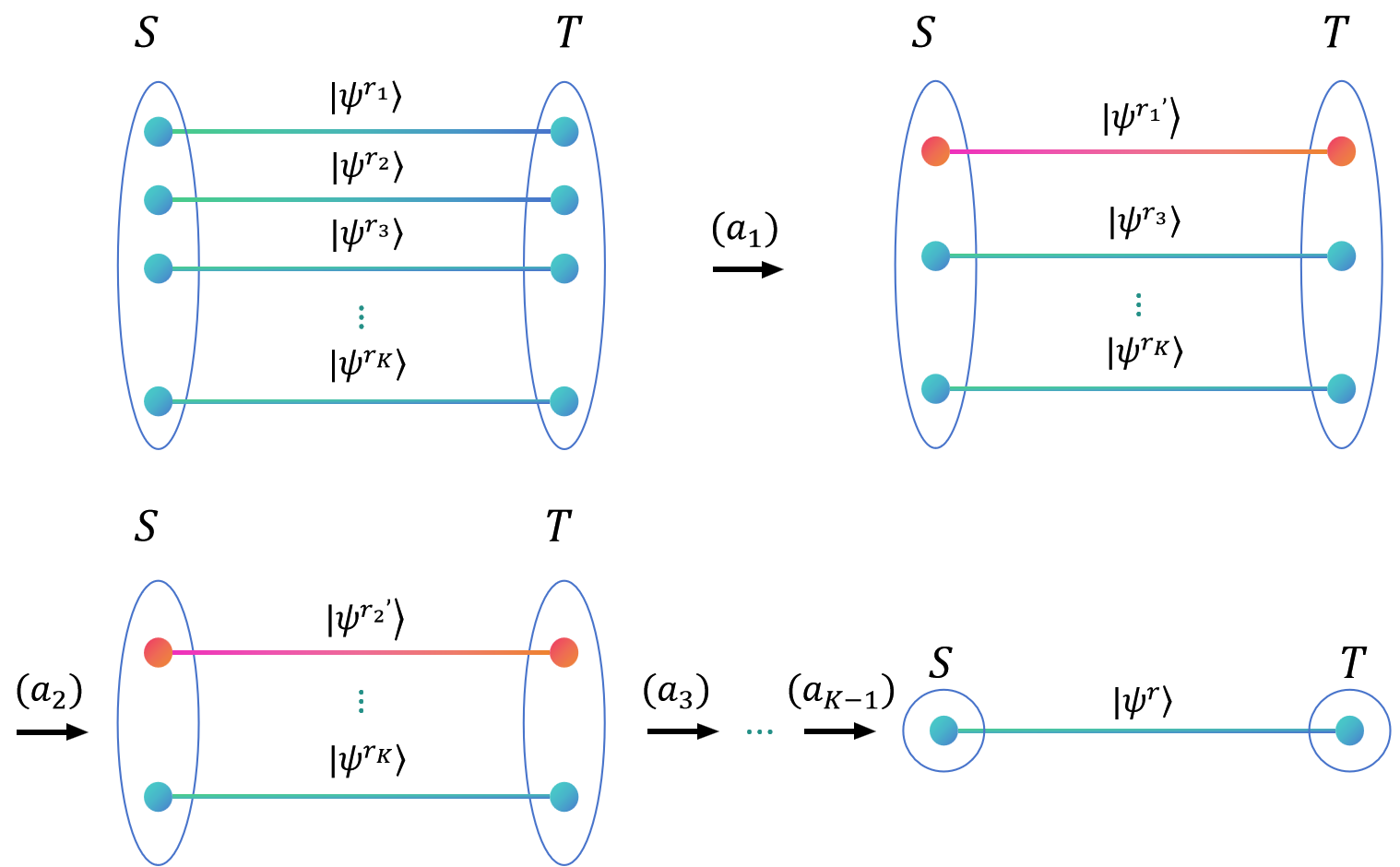}\\
     \caption{Continuous-variable entanglement concentration.
     Consider $K$ parallel TMSVSs written as the tensor product: $\bigotimes\nolimits_{k=1}^{K}|\psi^{r_{k}}\rangle$.
     Step $a_1$: The TMSVS $|\psi^{r_{1}'}\rangle$ is obtained by executing the TMSVS entanglement concentration protocol on  $|\psi^{r_{1}}\rangle\otimes|\psi^{r_{2}}\rangle$;
     Step $a_2$: Performing the scheme again on $|\psi^{r_{1}'}\rangle\otimes|\psi^{r_{3}}\rangle$ yields the TMSVS $|\psi^{r_{2}'}\rangle$; and so on.
     This eventually results in a single TMSVS $|\psi^{r}\rangle$, where the squeezing parameter $r$ satisfies Eq.~\eqref{(N+M)-CONCEN}.}\label{FIG-concen1}
  \end{figure}

We now illustrate the optimal concentration order. In general, depending on which two modes are first concentrated, the prefactor in Eq.~\eqref{(N+M)-CONCEN} could be $\sinh{r_{k}}$ for any $k$ other than $\sinh{r_{1}}$. However, the optimal entanglement of $|\psi^{r}\rangle$ is obtained by first concentrating the mode of the highest entanglement (largest squeezing parameter, $r_1$):

{\begin{theorem}
The squeezing parameter $r$ of the final TMSVS $|\psi^{r}\rangle$ is maximized if and only if the TMSVS with the largest squeezing parameter ($\max\limits_{1\leq k\leq K}r_k$) is among the first two TMSVSs to be concentrated. The corresponding optimal $r$ is given by $\sinh r=(\tanh{\max\limits_{1\leq k\leq K}r_k)} \left(\prod_{k=1}^{K}\cosh r_{k}\right)$.
\end{theorem}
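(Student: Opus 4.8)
The plan is to recast the iterated concentration as a computation on a binary tree and then bound its output. Each elementary concentration merges two TMSVSs of parameters $a\ge b$ into one with $\sinh s=\sinh a\cosh b$; squaring and using $\sinh^2=\cosh^2-1$ gives the more convenient form $\cosh^2 s=\cosh^2 a\,\cosh^2 b-\sinh^2 b$, and I note that the physical rule (Lemma~\ref{Lemma2} boosts the larger parameter) forces the larger value into the $\sinh$ factor, consistent with the elementary fact $\sinh a\cosh b-\sinh b\cosh a=\sinh(a-b)\ge0$. A full schedule on $K$ modes is then a binary tree whose $K$ leaves carry $r_1,\dots,r_K$ and whose internal nodes apply this merge; the goal is to maximize the root value $r$.

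The core is an upper bound proved by induction on the tree: any subtree with leaf multiset $L$, maximal leaf $m=\max_{k\in L}r_k$, and cosh-product $C=\prod_{k\in L}\cosh r_k$ outputs $s$ with $\sinh^2 s\le\tanh^2 m\,C^2$, equivalently $\sinh s\le\sinh m\prod_{k\ne k^\star}\cosh r_k$ with $k^\star$ a maximizing index. The base case (single leaf) is an identity. For the inductive step I merge two subtrees with data $(s_L,m_L,C_L)$ and $(s_R,m_R,C_R)$, placing the larger output into $\sinh$, and write $\sinh^2 s=\sinh^2 s_{\text{big}}(1+\sinh^2 s_{\text{small}})$. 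Inserting the two inductive bounds and dividing out the common factor reduces the claim to the single inequality $1+\tanh^2 m'\,C'^2\le C'^2$ for the smaller-output side, i.e.\ to $C'^2\ge\cosh^2 m'$. This is exactly where the tree structure is decided: since every leaf has $r_k>0$, one has $C'=\prod_k\cosh r_k\ge\cosh m'$, with equality \emph{iff} that side is a single leaf. When the global maximum instead sits on the smaller-output side, one extra monotonicity step $\tanh^2 m_L\le\tanh^2 m_R$ is inserted before the same crux inequality is applied; I will treat this subcase separately. Telescoping to the root yields $\sinh r\le\tanh(r_{\max})\prod_k\cosh r_k$.

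To show the bound is attained I take the sequential order that merges $r_{\max}$ in the very first step: the running output then always exceeds every not-yet-merged leaf, so each later merge places the new leaf into $\cosh$, and the product telescopes to $\sinh r=\sinh r_{\max}\prod_{k\ne k^\star}\cosh r_k=\tanh(r_{\max})\prod_k\cosh r_k$, matching both the bound and the stated formula. For the converse I trace the equality conditions of the lemma: tightness at every node forces the smaller-output side to be a single leaf (hence the optimal tree must be a caterpillar, so restricting to the sequential scheme of Theorem~\ref{N+M-CONCEN} loses nothing) and forces the global maximum onto the larger-output side at each merge. Unrolling the caterpillar from the root downward then shows the maximum cannot be any of the leaves $\ell_3,\dots,\ell_K$ introduced after the first merge, so it must be among the first two. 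Conversely, if $r_{\max}$ enters only at a later step the leftover factor obeys $C'^2>\cosh^2 m'$ (or the gap $\sinh(r_1-r_2)>0$ appears), making the inequality strict.

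The main obstacle will be the equality/``only if'' direction rather than the bound itself. In particular I expect the bookkeeping of ties (several modes sharing the maximal squeezing) and the subcase where a subtree of smaller maximal parameter nevertheless produces the larger output—the case that triggers the separate monotonicity argument above—to require the most care, together with confirming that every non-caterpillar tree is strictly suboptimal so that the ``first two'' characterization is both necessary and sufficient.
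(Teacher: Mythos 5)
Your proposal is correct, and it takes a genuinely different route from the paper's own proof. The paper restricts attention to sequential schedules (one accumulated state absorbing one new TMSVS per step) and argues by induction on $K$: the base case $K=3$ compares the two possible orders directly, and the inductive step reduces the case where $r_1$ enters last to the hypothesis for $n$ modes, each time invoking the elementary bound $\cosh r''=\sqrt{1+\sinh^2 r_2\prod_{k\ge 3}\cosh^2 r_k}\le\prod_{k\ge 2}\cosh r_k$. Your crux inequality $C'\ge\cosh m'$ (with equality iff the smaller-output side is a single leaf) is exactly this same elementary bound, but you deploy it inside an induction over subtrees rather than over the number of modes, which yields a universal upper bound $\sinh s\le\tanh(r_{\max})\prod_k\cosh r_k$ valid for \emph{every} binary-tree schedule. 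This buys you two things the paper does not deliver: coverage of non-sequential schedules (parallel merges of disjoint pairs), which the paper's induction never contemplates, together with the conclusion that any optimal schedule must be a caterpillar; and an explicit treatment of strictness, hence a genuine ``only if,'' whereas the paper's inequalities are strict only because all $r_k>0$ and this is left implicit. The two points you defer do go through: in the tie case, any maximal-parameter mode among the first two attains the bound (consistent with reading the statement as ``a TMSVS with the largest parameter''), and a merge of two multi-leaf subtrees is always strictly suboptimal since then $C'>\cosh m'$ strictly; strictness propagates to the root because the merge map $(x,y)\mapsto xy+\max(x,y)$ in $\sinh^2$ variables is strictly increasing in each argument. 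The price of your approach is the extra bookkeeping; for the sequential scheme the paper actually defines, its shorter induction on $K$ suffices.
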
} 
\begin{proof} Assume w.l.o.g.~$r_1\ge r_2\ge\dots\ge r_K$.  We first consider the simplest nontrivial case $K=3$ and then generalize to $K>3$:

{Case~1 ($K=3$).} If $|\psi^{r_2}\rangle$ and $|\psi^{r_3}\rangle$ are concentrated first (without $|\psi^{r_1}\rangle$), the final squeezing parameter $r$ satisfies
  \begin{eqnarray*}
    \sinh r=\max\{\cosh r_{1}\sinh r',\ \sinh r_{1}\cosh r'\},
  \end{eqnarray*}
where $r'=\text{arcsinh}(\sinh r_{2}\cosh r_{3})$.
Since $r_{1}\geq r_{k}$ for $k=2,3$, we have $\sin r_{1}\cosh r_{2}\cosh r_{3}\geq\cosh r_{1}\sinh r'$.
Furthermore, $\sin r_{1}\cosh r_{2}\cosh r_{3}\geq\sinh r_{1}\cosh r'$ holds, due to
  \begin{eqnarray*}
    \cosh r'
    &=&\sqrt{1+\sinh^{2}r'}\\
    &=&\sqrt{1+\sinh^{2}r_{2}\cosh^{2}r_{3}}\\
    &=&\sqrt{\cosh^{2}r_{2}\cosh^{2}r_{3}-(\cosh^{2}r_{3}-1)}\\
    &\leq&\cosh r_{2}\cosh r_{3}.
  \end{eqnarray*}
Thus, $r$ would be greater if $|\psi^{r_1}\rangle$ was concentrated first.

{Case~2 ($K>3$).}
Assume that this {theorem} holds true for $K=n\ (n\geq3)$.
Then, for $K=n+1$, $|\psi^{r_1}\rangle$ must either be the first to concentrate (since it is the largest among the first $n$ TMSVSs to be concentrated), or the last to concentrate.
In the latter case, $|\psi^{r_2}\rangle$ (with the second largest $r_2$), which is now the largest among the first $n$ TMSVSs, must be the first to be concentrated. This gives rise to
  \begin{eqnarray*}
    \sinh r=\max\{\cosh r_{1}\sinh r'',\ \sinh r_{1}\cosh r''\},
  \end{eqnarray*}
where $r''=\text{arcsinh}\left(\sinh r_{2}\prod_{k=3}^{n+1}\cosh r_{k}\right)$.
Because $r_{1}\geq r_{2}$ and
  \begin{eqnarray*}
    \cosh r''
    &=&\sqrt{1+\sinh^{2}r''}\\
    &=&\sqrt{1+\sinh^{2}r_2\prod_{k=3}^{n+1}\cosh^{2}r_k}\\
    &=&\sqrt{\prod_{k=2}^{n+1}\cosh^{2} r_{k}-(\prod_{k=3}^{n+1}\cosh^{2} r_{k}-1)}\\
    &\leq&\prod_{k=2}^{n+1}\cosh r_{k},
  \end{eqnarray*}
we have $\sinh r_{1}\prod_{k=2}^{n+1}\cosh r_{k}\geq\cosh r_{1}\sinh r''$ and $\sinh r_{1}\prod_{k=2}^{n+1}\cosh r_{k}\geq\sinh r_{1}\cosh r''$, respectively.
It is therefore optimal to concentrate $|\psi^{r_1}\rangle$ first.
\end{proof}

To express Eq.~\eqref{(N+M)-CONCEN} in terms of entanglement measure, let $\chi_{k}$ and $\chi$ be the ratio negativity of input TMSVS $|\psi^{r_k}\rangle$ and  the output TMSVS $|\psi^{r}\rangle$, respectively. We have:
  \begin{eqnarray}
  \label{G-G concen}
    \frac{\chi^{2}}{1-\chi^{2}}=
    \frac{\max\limits_{1\leq k\leq K}\chi_{k}^{2}}
         {\prod^{K}_{k=1}\left(1-\chi_{k}^{2}\right)},
  \end{eqnarray}
{yielding a parallel function
\begin{eqnarray}\label{(N+M)-CONCEN-v}
    \text{para}(\chi_1,\chi_2,...):=\chi=\frac{\max\limits_{1\leq k\leq K}\chi_k}{\sqrt{\max\limits_{1\leq k\leq K}\chi_k^2+\prod_{k=1}^K(1-\chi_k^2)}}.\nonumber\\
\end{eqnarray}}

\subsection{Negativity Percolation in Non-Series-Parallel QNs}\label{sec-bridge}

For non-series-parallel QNs, the negativity percolation theory (NegPT) based on the ratio negativity is defined using approximate star-mesh transforms (more details are provided in Ref.~\cite{XJS2021}).
A star-mesh transform~\cite{star-mesh_v70} can establish a local equivalence of the connectivity between an $(s+1)$-node star graph and an $s$-node complete graph, based only on series and parallel rules. Note that the star-mesh transform is generally irreversible. Specifically, while a star graph with arbitrary weights can most likely  be transformed into a complete graph with matching weights, the inverse transformation from a complete graph with arbitrary weights back to a star graph is usually not possible. An exception to this is when $s=3$, in which case the star-mesh transform is reversible and is known as the ``Y-$\Delta$/$\Delta$-Y'' transforms.

In the following, we examine several non-series-parallel examples (``bridge'' topologies and lattices) and explain how to calculate the sponge-crossing ratio negativity $\mathrm{X}_\text{SC}$ using the star-mesh transform.

  \begin{figure}[t!]
    \centering
     \includegraphics[width=230pt]{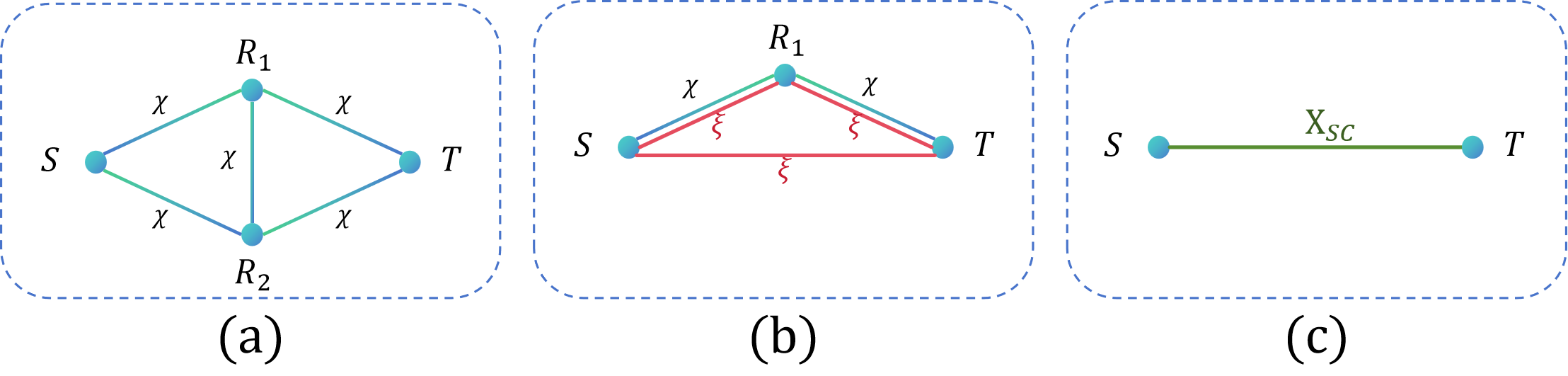}
    \caption{Wheatstone bridge.~(a)~The original QN.~(c)~The QN after the ``Y-$\Delta$'' transform.~(c)~The final QN.}\label{Figure7}
  \end{figure}

From Eqs.~\eqref{eq_series}~and~\eqref{(N+M)-CONCEN-v}, we have
\begin{eqnarray*}
    \text{seri}(\chi_1,\chi_2)&=&\chi_1\chi_2
\end{eqnarray*}
and
\begin{eqnarray*}
    \text{para}(\chi_1,\chi_2)&=&\frac{\max\{\chi_1,\chi_2\}}{\sqrt{\max\{\chi_1^2,\chi_2^2\}+(1-\chi_1^2)(1-\chi_2^2)}}.
\end{eqnarray*}
The simplest non-series-parallel network is a 4-node network: a Wheatstone bridge, as shown in Fig.~\ref{Figure7}(a).
Let each link in the QN be a TMSVS with identical ratio negativity $\chi$. We apply the star-mesh transform to $R_2$, namely the ``Y-$\Delta$'' transform; then, the QN becomes a series-parallel graph shown in Fig.~\ref{Figure7}(b), with three new TMSVSs of ratio negativity $\xi$ shared between $S$ and $R_1$, $R_1$ and $T$, and $S$ and $T$, respectively. Here, $\xi$ satisfies (by the star-mesh transform):
 \begin{eqnarray}
  \mathrm{seri}(\chi,\chi)=\mathrm{para}(\xi,\xi^2),
 \end{eqnarray}
 namely,
 \begin{eqnarray}\label{solve}
  \chi^2=\frac{\xi}{\sqrt{\xi^6-\xi^4+1}}.
 \end{eqnarray}
The solution of $\xi$ is given by $\xi=y(\chi)$, where
  \begin{eqnarray}
    y(x)=\sqrt{\frac{1+2\sqrt{1+3x^{-4}}\cos{\frac{\theta+\pi}{3}}}{3}},
  \end{eqnarray}
and $\theta=\arccos{\left[2^{-1}(-9x^{-4}+25)(1+3x^{-4})^{-\frac{3}{2}}\right]}$.
Specifically, let $W=\chi^4$ and $Y=\xi^2$. Then, Eq.~\eqref{solve} is equivalent to 
  \begin{eqnarray}
    Y^3-Y^2-W^{-1}Y+1=0,
  \end{eqnarray}
where $0\leq Y\leq W\leq1$.
The  Shengjin’s formula~\cite{Shengjin1989} implies that the roots of this cubic polynomial are
  \begin{equation}
  \label{SJ}
    \left\{
      \begin{array}{l}
         Y_1={\left[1-2\sqrt{1+3W^{-1}}\cos{\frac{\theta}{3}}\right]}/{3},\\
         Y_{2,3}={\left[1+2\sqrt{1+3W^{-1}}\left(\cos{\frac{\theta\mp\pi}{3}}\right)\right]}/{3},\\
      \end{array}
    \right.
  \end{equation}
where $\theta=\arccos\left[2^{-1}(-9W^{-1}+25)(1+3W^{-1})^{-\frac{3}{2}}\right]$.
Only the third root $Y_3$ meets the numerical requirements $0\leq Y\leq W\leq1$. 
Therefore, we have $\xi=\sqrt{Y_3}=y(\chi)$.

Solving Fig.~\ref{Figure7}(b), the sponge-crossing ratio negativity $\mathrm{X}_\text{SC}$ is the ratio negativity of the final TMSVS between node $S$ and $T$ [Fig.~\ref{Figure7}(c)], which is given by $\mathrm{X}_\text{SC}=b(\chi,y(\chi))$ with 
  \begin{eqnarray}
    b(x,y)=\frac{x^2}{\sqrt{x^4+(1-y^2)\left[\left(x^2y^2-y^2+1\right)^2-x^4\right]}}.\nonumber\\
  \end{eqnarray}

 \begin{figure}[t!]
    \centering
     \includegraphics[width=230pt]{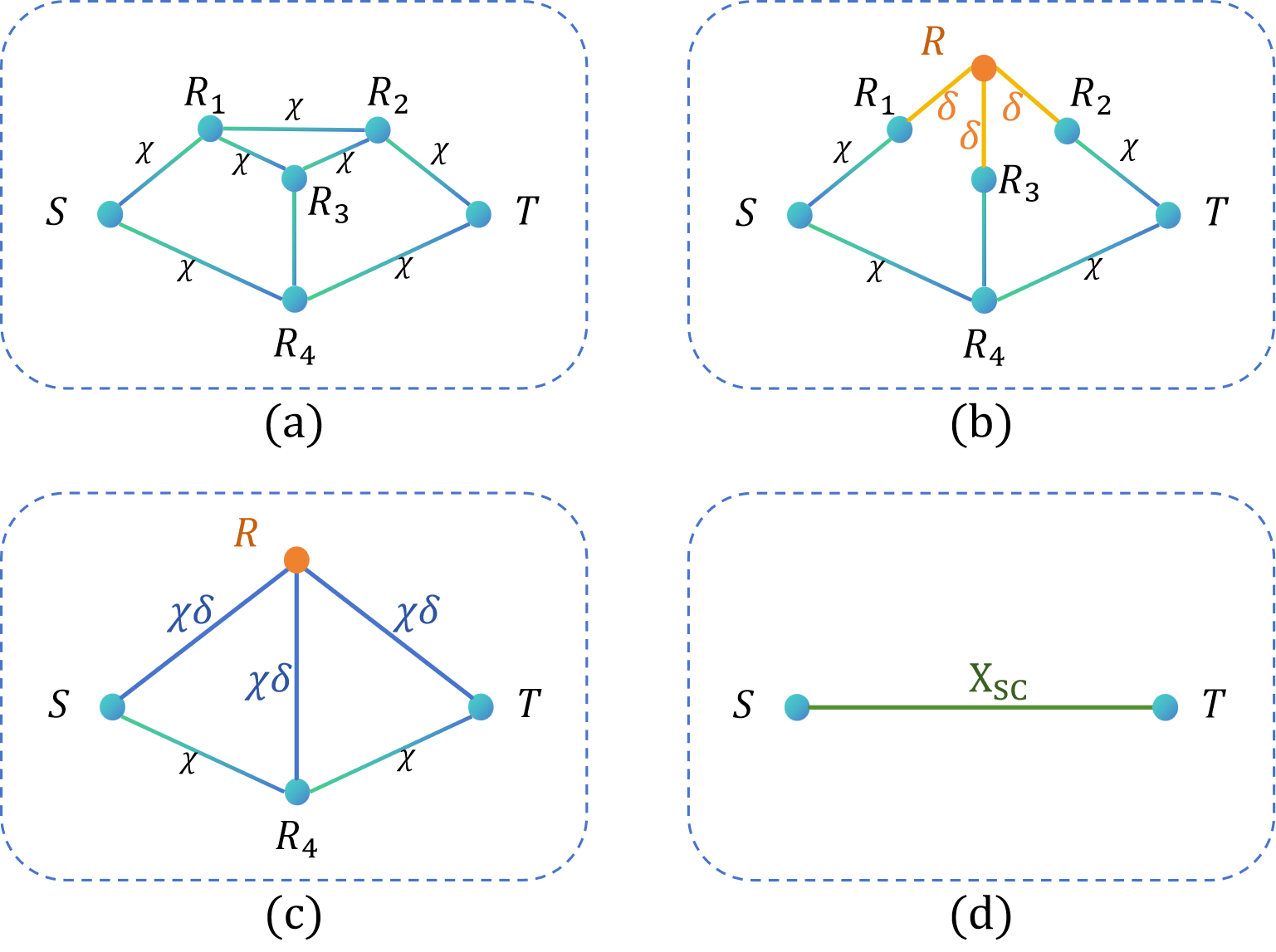}
    \caption{Kelvin bridge.~(a)~The original QN.~(b)~The QN after the ``$\Delta$-Y'' transform.~(c)~The QN after applying series rules to $R_1$, $R_2$ and $R_3$.~(d)~The final QN. }\label{Figure8}
  \end{figure}

A Kelvin bridge, the second simplest non-series-parallel network with 5 nodes  [Fig.~\ref{Figure8}(a)], can be viewed as the QN after performing the star-mesh transform on node $R$ in the QN shown in Fig.~\ref{Figure8}(b), where $\delta:=y^{-1}(\chi)$ is the inverse function of $y(\chi)$. This inverse progress from Fig.~\ref{Figure8}(a) to Fig.~\ref{Figure8}(b) is thus known as the ``$\Delta$-Y'' transform. Applying the series rule on nodes $R_1$, $R_2$ and $R_3$, we obtain Fig.~\ref{Figure8}(c), which is the Wheatstone bridge QN as discussed above. Consequently,  $\mathrm{X}_\text{SC}$ for the Kelvin bridge [Fig.~\ref{Figure8}(d)]  is given by
  \begin{eqnarray}
    \mathrm{X}_\text{SC}=b\left(\chi,y\left(\chi\delta(\chi)\right)\right).
  \end{eqnarray}

\section*{Data availability}
Data generated and analyzed in this study are available from the corresponding authors upon request.

\section*{Code availability}
Code used to generate data in this study are available from the corresponding authors upon request.

\section*{References}

\bibliography{NegPT}

\section*{Legends of Figures}

Figure 1:
Gaussian-to-Gaussian deterministic entanglement transmission (G-G DET) scheme. Applicable to Gaussian quantum networks (QN), the scheme consists of two LOCC protocols:  
    (a)~Entanglement swapping, facilitated by homodyne detection and displacement;
    (b)~Entanglement concentration, facilitated by non-standard optical components. Both protocols are deterministic, taking two (or more) TMSVS $|\psi^{r_{i}}\rangle$ as input and a new TMSVS $|\psi^{r}\rangle$ as output.
    (c)~The two LOCC protocols map to series and parallel rules, respectively, to construct G-G DET.
    (d)~Consider a QN example built upon three node. The G-G DET scheme consists of two steps:
   First, the parallel rule converts the states $|\psi^{r_1}\rangle$ and $|\psi^{r_2}\rangle$ ($r_1\geq r_2$) into $|\psi^{r_{1,2}}\rangle$ with $\sinh r_{1,2} = \sinh r_1 \cosh r_2$ between $S$ and $R$; second, the series rule transforms $|\psi^{r_{1,2}}\rangle$ and $|\psi^{r_{3}}\rangle$ to the final state $|\psi^{r}\rangle$ with the ratio negativity $\text{X}_{\text{SC}}=\tanh r_{1,2} \tanh r_3$ between $S$ and $T$.

Figure 2:
Bethe lattice.
    (a)~A Bethe lattice of degree $k$ (i.e.,~each node is incident to $k$ links) and network depth $l$ (the path length from the yellow node to the red nodes).
   (b)~The sponge-crossing ratio negativity $\mathrm{X}_\text{SC}$ {between $S$ and $T$ for various $k$ (right panel), satisfying the power law $\mathrm{X}_\text{SC}-\mathrm{X}_\text{SC}^{+}\sim |\chi-\chi_{\text{th}}|^{0.47(5)}$ as $\chi\to\chi_{\text{th}}^+$ (left panel).}
    The numerical value~$0.47\pm0.05$ is derived by a linear least-squares fit to the sixteen data points.
   (c)~
    When $\chi\to\chi_{\text{th}}^-$, $\mathrm{X}_\text{SC}$ exhibits a plateau behavior until the network depth $l$ exceeds the correlation length $l^*$  
    {(defined as the depth $l$  at which $\mathrm{X}_\text{SC}=0.5$),}
    after which $\mathrm{X}_\text{SC}$ abruptly drops to zero.
    (d)~Near the critical threshold, we observe $l^* \sim \left|\chi - \chi_{\text{th}}\right|^{-0.508(9)}$, indicating $z\nu \approx 1/2$.

Figure 3:
Entanglement percolation in two-dimensional square  lattices.
    (a)~$\text{X}_{\text{SC}}$ for square lattices with different side length $L$.
    (b)~Scaling of the correlation length {$\xi$} near the critical threshold $\chi_{\text{th}}\approx0.715$ follows {$\xi \sim |\chi - \chi_{\text{th}}|^{-\nu}$}, with a fitted critical exponent $\nu = 0.02\pm0.02$.

Figure 4:
Feedback stabilization of QN  against entanglement decay. 
    (a)~Under the same feedback control [Eq.~\eqref{eq-FOPTD}], 
    the DV-based QN ($k=3$ Bethe lattice) exhibits rapid stabilization; (b)~whereas the CV-based QN exhibits long-term ``on/off''
    instability, a direct result of the abrupt drop in Fig.~\ref{Figure2}(b).

Figure 5:
Continuous-variable entanglement swapping. 
   Given $N$ TMSVSs $|\psi^{r_{1}}\rangle,|\psi^{r_{2}}\rangle,\dots,|\psi^{r_{N}}\rangle$, the final state $|\psi^{r}\rangle$ between $S$ and $T$ derived by entanglement swapping is a TMSVS, where the squeezing parameter $r$ satisfies Eq.~\eqref{1D}.

Figure 6:
Continuous-variable entanglement concentration.
     Consider $K$ parallel TMSVSs written as the tensor product: $\bigotimes\nolimits_{k=1}^{K}|\psi^{r_{k}}\rangle$.
     Step $a_1$: The TMSVS $|\psi^{r_{1}'}\rangle$ is obtained by executing the TMSVS entanglement concentration protocol on  $|\psi^{r_{1}}\rangle\otimes|\psi^{r_{2}}\rangle$;
     Step $a_2$: Performing the scheme again on $|\psi^{r_{1}'}\rangle\otimes|\psi^{r_{3}}\rangle$ yields the TMSVS $|\psi^{r_{2}'}\rangle$; and so on.
     This eventually results in a single TMSVS $|\psi^{r}\rangle$, where the squeezing parameter $r$ satisfies Eq.~\eqref{(N+M)-CONCEN}.

Figure 7:
Wheatstone bridge. 
    (a)~The original QN. (b)~The QN after the ``Y-$\Delta$'' transform. 
    (c)~The final QN.

Figure 8:
Kelvin bridge. 
    (a)~The original QN.~(b)~The QN after the ``$\Delta$-Y'' transform.~(c)~The QN after applying series rules to $R_1$, $R_2$ and $R_3$.~(d)~The final QN.

\vspace{5mm}
\section*{Acknowledgements}

Y.Z. (Zhao), K.H., and J.H. were supported by the National Natural Science Foundation of China under Grants No.~12271394 and No.~12071336. 
S.H. was supported by the Israel Ministry of Innovation, Science \& Technology (grant number 01017980), the Israel Science Foundation (grant number 201/25), the Binational Israel-China Science Foundation (grant number 3132/19), and the European Union under the Horizon Europe grant OMINO (grant number 101086321).

\section*{Author contributions}

All authors participated in discussing the results and reviewing the manuscript. The original concept was conceived by Y.Z. (Zhao), K.H., and J.H. Numerical computations were performed by Y.Z. (Zhao) and X.M. The theoretical analysis was conducted by Y.Z. (Zhao), X.M., and Y.Z. (Zhang). 
The initial draft of the manuscript was written by Y.Z. (Zhao) and X.M., and revised by Y.Z. (Zhang), J.G., S.H., K.H., and J.H. All authors have read and approved the manuscript.

\section*{Competing interests}
The authors declare no competing interests.

\end{document}